\numberwithin{equation}{section}
\theoremstyle{plain}
\newtheorem{thm}{Theorem}[section]
\newtheorem{theorem}[thm]{Theorem}
\newtheorem{corollary}[thm]{Corollary}
\newtheorem{lemma}[thm]{Lemma}
\newtheorem{prop}[thm]{Proposition}
\newtheorem{proposition}[thm]{Proposition}
\theoremstyle{remark}
\newtheorem{remark}[thm]{Remark}
\theoremstyle{definition}
\newcounter{mnotecount}[section]
\renewcommand{\phi}{\varphi}
\newcommand{\bR}{\mathbb{R}}
\newcommand{\bS}{\mathbb{S}}
\newcommand{\bZ}{\mathbb{Z}}
\newcommand{\cA}{\mathcal{A}}
\newcommand{\cM}{\mathcal{M}}
\newcommand{\bg}{\mathbf{g}}
\newcommand{\bT}{\mathbf{T}}
\newcommand{\ghat}{{\widehat{g}}}
\newcommand{\Khat}{\widehat{K}}
\newcommand{\gbar}{\overline{g}}
\renewcommand{\hbar}{\overline{h}}
\newcommand{\gtil}{\widetilde{g}}
\def\into{\hookrightarrow}
\newcommand{\Wtil}{\widetilde{W}}
\newcommand{\nablabar}{\overline{\nabla}}
\newcommand{\definedas}{\mathrel{\raise.095ex\hbox{\rm :}\mkern-5.2mu=}}
\DeclareMathOperator{\tr}{tr}
\DeclareMathOperator{\divg}{div}
\newcommand{\lie}{\mathcal{L}}
\newcommand{\nablatil}{\widetilde{\nabla}}
\newcommand{\pdiff} [2]{\frac{\partial #1}{\partial #2}}
\newcommand{\ric}{\mathrm{Ric}}
\newcommand{\rictil}{\widetilde{\ric}}
\newcommand{\ricdd}[2]{\ric_{#1 #2}}
\newcommand{\ricbar}{\overline{\ric}}
\newcommand{\bric}{\mathbf{Ric}}
\newcommand{\scal}{\mathrm{Scal}}
\newcommand{\scalbar}{\overline{\scal}}
\newcommand{\bscal}{\mathbf{Scal}}
\newcommand{\hessbardd}[2]{\overline{\nabla}^2_{#1, #2}}
\newcommand{\hesstildd}[2]{\widetilde{\nabla}^2_{#1, #2}}
\begin{document}


\title[Limit equation criterion with a 1-parameter symmetry]
{Limit equation for vacuum Einstein constraints with a translational
Killing vector field in the compact hyperbolic case}

\author{Romain Gicquaud}
\address{Romain Gicquaud, Laboratoire de Math\'ematiques et de Physique Th\'eorique,
UFR Sciences et Technologie, Universit\'e Fran\c cois Rabelais, Parc de Grandmont,
37300 Tours, France} \email{romain.gicquaud@lmpt.univ-tours.fr}

\author{C\'ecile Huneau}
\address{C\'ecile Huneau, \'Ecole Normale Sup\'erieure,
D\'epartement de Math\'ematiques et Applications
45, rue d'Ulm 75005 Paris, France}
\email{cecile.huneau@ens.fr}

\begin{abstract}
We construct solutions to the constraint equations in general relativity
using the limit equation criterion introduced in \cite{DahlGicquaudHumbert}.
We focus on solutions over compact 3-manifolds admitting a $\bS^1$-symmetry
group. When the quotient manifold has genus greater than 2, we obtain strong
far from CMC results.
\end{abstract}

\subjclass[2010]{53C21 (Primary), 35Q75, 53C80, 83C05 (Secondary)}
%
%
%

\date{\today}

\keywords{Einstein constraint equations, non-constant mean curvature,
 conformal method.}

\maketitle

\tableofcontents

\section{Introduction}

General relativity describes the universe as a $(3+1)$-dimensional
manifold $\cM$ endowed with a Lorentzian metric $\bg$. The Einstein
equations describe how non-gravitational fields influence the curvature
of $\bg$:
\[
\bric_{\mu\nu} - \frac{\bscal}{2} \bg_{\mu\nu} = 8 \pi \bT_{\mu\nu},
\]
where $\bric$ and $\bscal$ are respectively the Ricci tensor and the
scalar curvature of the metric $\bg$ and $\bT_{\mu\nu}$ is the sum of
the energy-momentum tensors of all the non-gravitational fields.

Einstein equations can be formulated as a Cauchy problem with initial
data given by a set $(M, \ghat, \Khat)$, where $M$ is a 3-dimensional
manifold, $\ghat$ is a Riemannian metric on $M$ and $\Khat$ is a symmetric
$2$-tensor on $M$. $\ghat$ and $\Khat$ correspond to the first and second
fundamental forms of $M$ seen as an embedded space-like hypersurface
in the universe $(\cM, \bg)$ solving the Einstein equations.

It turns out that the Einstein equations imply compatibility conditions
on $\ghat$ and $\Khat$ known as the constraint equations:

\begin{subequations}\label{eqConstraintEquations}
\begin{empheq}[left=\empheqlbrace]{align}
\label{eqHamiltonian}
\scal_\ghat + (\tr_\ghat \Khat)^2 - |\Khat|_\ghat^2 & =2 \rho,\\
\label{eqMomentum}
\divg_\ghat \Khat - d (\tr_\ghat \Khat ) & = j,
\end{empheq}
\end{subequations}
where, denoting by $N$ the unit future-pointing normal to $M$ in $\cM$,
one has
\[
\rho   = 8 \pi \bT_{\mu\nu} N^\mu N^\nu, \quad
j_i    = 8 \pi \bT_{i \mu} N^\mu.
\]

We assume here that $\mu$ and $\nu$ go from $0$ to $3$ and denote spacetime
coordinates while Latin indices go from $1$ to $3$ and correspond to
coordinates on $M$.

In this article, to keep things simple, we will consider no field but the
gravitational one (vacuum case). As a consequence, we impose $\bT \equiv 0$.
We will also assume that the spacetime possesses a $\bS^1$-symmetry
generated by a spacelike Killing vector field. This allows for a reduction
of the $(3+1)$-dimensional study of the Einstein equations to a
$(2+1)$-dimensional problem. This symmetry assumption has been introduced
and studied by Y. Choquet-Bruhat and V. Moncrief in
\cite{ChoquetBruhatMoncrief} (see also \cite{ChoquetBruhat}) in the case
of a spacetime of the form $\Sigma \times \bS^1 \times \bR$, where $\Sigma$
is a compact 2-dimensional manifold of genus $G \geq 2$, $\bS^1$ corresponds
to the orbit of the $\bS^1$-action and $\bR$ is the time axis. They proved
the existence of global solutions corresponding to perturbations of a
particular expanding spacetime. In \cite{ChoquetBruhatMoncrief}, they
use solutions of the constraint equations with constant mean curvature
(CMC, i.e. constant $\tr_{\ghat} \Khat$) on the spacelike hypersurface
$\Sigma \times \bS^1 \times \{0\}$ as initial data. The construction of
such solutions is fairly direct. In this article we shall generalize
their construction to more general initial data allowing for non-constant
mean curvature.

The method which is generally used to construct initial data for the
Einstein equations is the conformal method which consists in decomposing
the metric $\ghat$ and the second fundamental form $\Khat$ into given
data and unknowns that have to be adjusted so that $\ghat$ and $\Khat$
solve the constraint equations, see Section \ref{secPrelim}. The equations
for the unknowns, namely a positive function playing the role of a conformal
factor and a 1-form, are usually called the conformal constraint equations.
Extended discussion of the conformal method can be found in a series of
very nice articles by D. Maxwell \cite{MaxwellConformalParameterization,
MaxwellConformalMethod,MaxwellKasner,MaxwellInitialData}.

These equations have been extensively studied in the case of constant
mean curvature (CMC) since the system greatly simplifies in this case.
We refer the reader to the excellent review article \cite{BartnikIsenberg}
for an overview of known results in this particular case. The non-CMC
case remained open for a couple of decades. Only the case of nearly
constant mean curvature was studied. Two major breakthroughs were obtained
in \cite{HNT1}, \cite{MaxwellNonCMC} and \cite{DahlGicquaudHumbert}
concerning the far from CMC case. A comparison of these methods is given
in \cite{GicquaudNgo}.

In this article, we follow the method described in \cite{DahlGicquaudHumbert}.
Namely, we give the following criterion: if a certain limit equation admits
no non-zero solution, the conformal constraint equations admit at least one
solution. The other method \cite{HNT1, MaxwellNonCMC} would require that
$\Sigma$ is $\bS^2$ so that it carries a metric with positive scalar
curvature and has no conformal Killing vector field, which is impossible.

This approach has been generalized to the asymptotically hyperbolic case
in \cite{GicquaudSakovich} and to the asymptotically cylindrical case in
\cite{DiltsLeach}. The asymptotically Euclidean case
\cite{DiltsGicquaudIsenberg} and the case of compact manifolds with
boundary \cite{GicquaudHumbertNgo} are currently work in progress since
new ideas have to be found to get the criterion.

The outline of the paper is as follows. In Section \ref{secPrelim}, we
show how the Einstein equations reduce to a $(2+1)$-dimensional problem
in the case of a $\bS^1$-symmetry and exhibit the analog of the conformal
constraint equations in this case. We also state Theorem \ref{thmMain}
which is the main result of this article and Corollary \ref{corNearCMC}
which gives an example of application of Theorem \ref{thmMain}. Section
\ref{secProof} is devoted to the proof of Theorem \ref{thmMain}.
Finally, Section \ref{secEx} contains the proof of Corollary
\ref{corNearCMC}.

\section{Preliminaries}\label{secPrelim}

\subsection{Reduction of the Einstein equations}

Before discussing the constraint equations, we briefly recall the form of
the Einstein equations in the presence of a spacelike translational
Killing vector field. We follow here the exposition in \cite[Section XVI.3]{ChoquetBruhat}.

We recall that we want to write the Einstein equations on the manifold
$\cM = \Sigma \times \bS^1 \times \bR$, where $\Sigma$ is a Riemannian surface
and $\bR$ denotes the time direction, for some metric $\bg$ which is
invariant under translation along the
$\bS^1$-direction. We let $x^3$ denote the coordinate along the $\bS^1$-
direction (seen as $\bR/\bZ$), choose local coordinates $x^1$, $x^2$ on
$\Sigma$ and denote by $x^0$ the time coordinate.

A metric $\bg$ on $\cM$ admitting $\partial_3$ as a Killing vector field
has the form

\[
\bg = \gtil + e^{2\gamma} \left(dx^3 + A\right)^2,
\]
where $\gtil$ is a Lorentzian metric on $\Sigma \times \bR$, $A$ is a
1-form on $\Sigma \times \bR$ and $\gamma$ is a function on $\Sigma \times \bR$.
Since $\partial_3$ is a Killing vector field, $\gtil$, $A$ and $\gamma$ do
not depend on $x^3$. We set $F = dA$ the field strength of $A$. The Ricci
tensor $\bric$ of $\bg$ can be computed in terms of $\gtil$, $A$ and
$\gamma$. In the basis $(dx^0, dx^1, dx^2, dx^3 + A)$, the vacuum Einstein
equations ($\bric = 0$) become 

\begin{subequations}
\begin{empheq}[left=\empheqlbrace]{align}
\label{eqTangentialRic}
0 & = \bric_{\alpha\beta} = \rictil_{\alpha\beta}
      - \frac{1}{2} e^{2 \gamma} F_{\alpha}^{\phantom{\alpha}\lambda} F_{\beta\lambda} - \hesstildd{\alpha}{\beta} \gamma - \nabla_\alpha \gamma \nabla_\beta \gamma,\\
\label{eqMixedRic}
0 & = \bric_{\alpha3} = \frac{1}{2} e^{-\gamma} \nablatil_\beta \left(e^{3\gamma} F_\alpha^{\phantom{\alpha}\beta}\right),\\
\label{eqNormalRic}
0 & = \bric_{33} = -e^{-2\gamma} \left(-\frac{1}{4} e^{2\gamma} F_{\alpha\beta} F^{\alpha\beta}
      + \gtil^{\alpha\beta} \nabla_\alpha \gamma \nabla_\beta \gamma + \gtil^{\alpha\beta} \hesstildd{\alpha}{\beta} \gamma\right),
\end{empheq}
\end{subequations}
where the indices $\alpha$, $\beta$ and $\lambda$ go from $0$ to $2$, and are raised with respect to the metric $\gtil$.
The equation \eqref{eqMixedRic} is equivalent to $d(*e^{3\gamma} F) = 0$.
So we are going to assume that $*e^{3\gamma} F$ is an exact 1-form.
Therefore, there exists a potential $\omega: \Sigma \times \bR \to \bR$
such that $e^{3\gamma} F = d\omega$.

Defining $\gbar = e^{2\gamma} \gtil$, we obtain the following system for $\gbar$,
$\gamma$ and $\omega$: 

\begin{subequations}\label{eqSystemFinal}
\begin{empheq}[left=\empheqlbrace]{align}
\Box_{\gbar} \omega - 4 \nablabar^\alpha \gamma \nablabar_\alpha \omega & = 0,\\
\Box_{\gbar} \gamma - \frac{1}{2} e^{-4\gamma} \nablabar^\alpha \omega \nablabar_\alpha \omega & = 0,\\
\ricbar_{\alpha\beta} - 2 \nablabar_\alpha \gamma \nablabar_\beta \gamma - \frac{1}{2} e^{-4 \gamma} \nablabar_\alpha \omega \nablabar_\beta \omega & = 0,
\end{empheq}
\end{subequations}

where $\Box_{\gbar} = \gbar^{\alpha\beta} \hessbardd{\alpha}{\beta}$ is the d'Alembertian
associated to the metric $\gbar$, $\ricbar$ is its Ricci tensor and the indices are raised with respect to $\gbar$. We introduce the following notation

\[
 u \definedas (\gamma, \omega),
\]
together with the scalar product

\[
 \partial_\alpha u \cdot \partial_\beta u \definedas 2 \partial_\alpha \gamma \partial_\beta \gamma + \frac{1}{2} e^{-4\gamma} \partial_\alpha \omega \partial_\beta \omega.
\]

We are going to consider the Cauchy problem for the system \eqref{eqSystemFinal}.
As for the general Einstein equations, the initial data for this system
have to satisfy some constraint equations.

\subsection{The constraint equations}

We write the metric $\gbar$ under the following form:

\[\gbar = -N^2 dt^2 + g_{ij}\left(dx^i + \beta^i dt\right) \left(dx^i + \beta^j dt\right)\]

The coefficient $N$ is called the lapse, while the vector $\beta$ is called
the shift. $g$ is the Riemannian metric induced by $\gbar$ on the slices of constant
$t$. We consider the initial data for the spacelike surface $\Sigma$ which
is the constant $t=0$ hypersurface of $\Sigma \times \bR$. We also use
the notation \[\partial_t = \partial_0 - \lie_\beta,\] where $\lie_\beta$
is the Lie derivative associated to the vector field $\beta$. With this
notation, the second fundamental form of $\Sigma \subset \Sigma \times \bR$
reads \[K_{ij} = - \frac{1}{2N} \partial_t g_{ij}.\]
We denote by $\tau$ the mean curvature of $\Sigma$:

\[\tau \definedas g^{ij} K_{ij}.\]

The constraint equations are obtained by taking the $\partial_t-\partial_t$
and the $\partial_t-\partial_i$ components of the Einstein equations:

\begin{subequations}\label{eqConstraintEquations2}
\begin{empheq}[left=\empheqlbrace]{align}
\label{eqMomentum2}
\ricbar_{t i} - \frac{\scalbar}{2} \gbar_{t i} = N \left(\partial_i \tau - D^i K_{ij}\right) & = \partial_t u \cdot \partial_i u,\\
\label{eqHamiltonian2}
\ricbar_{t t} - \frac{\scalbar}{2} \gbar_{t t} = \frac{N^2}{2}\left(\scal - \left|K\right|^2 + \tau^2\right) & = \partial_t u \cdot \partial_t u + \frac{N^2}{2} \gbar^{\alpha\beta} \partial_\alpha u \cdot \partial_\beta u,\\
\end{empheq}
\end{subequations}

where $\scal$ is the scalar curvature of the metric $g$ and $D$ is its
Levi-Civita connection. Equation \eqref{eqMomentum2} is called the
\emph{momentum constraint} while Equation \eqref{eqHamiltonian2} is known
as the \emph{Hamiltonian constraint}.

\subsection{The conformal method}

In order to construct solutions to the system \eqref{eqConstraintEquations2},
we are going to use the well-known conformal method which we explain now.

Given a Riemann surface $\Sigma$ of genus $G \geq 2$, we let $g_0$ be a
metric on $\Sigma$ with constant scalar curvature $\scal_0 \equiv -1$ and
look for a metric $g$ in the conformal class of $g_0$: \[g = e^{2\phi} g_0\]
for some function $\phi: \Sigma \to \bR$. We
also decompose $K$ into a pure trace part and a traceless part,
\[K_{ij} = \frac{\tau}{2} g_{ij} + H_{ij},\] and, following
\cite{ChoquetBruhatMoncrief}, we set
\[\dot{u} \definedas \frac{e^{2u}}{N} \partial_t u.\]
The system \eqref{eqConstraintEquations2} then becomes

\begin{subequations}\label{eqConformalConstraints}
\begin{empheq}[left=\empheqlbrace]{align}
\label{eqVector}
\nabla^i H_{ij} & = - \dot{u} \cdot \partial_j u + \frac{e^{2\phi}}{2} \partial_j \tau,\\
\label{eqLichnerowicz}
\Delta \phi + e^{-2\phi} \left(\frac{1}{2} \dot{u}^2 + \frac{1}{2} \left|H\right|^2\right) & = e^{2\phi} \frac{\tau^2}{4} - \frac{1}{2}\left(1 + \left|\nabla u\right|^2\right),
\end{empheq}
\end{subequations}
where $\nabla$ denotes the Levi-Civita connection of the metric $g_0$, $\Delta$
is the Laplace-Beltrami operator of $g_0$ and from now on, unless stated otherwise,
all norms are taken with respect to the metric $g_0$.

In order to solve Equation \eqref{eqVector}, we split $H$ according to the
York decomposition (see Proposition \ref{propYork} for more details):

\[H = \sigma + LW,\]
where $\sigma$ is a transverse traceless (TT) tensor, i.e. $\tr_{g_0} \sigma \equiv 0$
and $\nabla^i \sigma_{ij} \equiv 0$, and $LW$ denotes the conformal Killing operator
acting on a 1-form $W$:
\[LW_{ij} = \nabla_i W_j + \nabla_j W_i - \nabla^k W_k g_{0ij}.\]

The system \eqref{eqConformalConstraints} finally becomes

\begin{subequations}\label{eqConformalConstraints2}
\begin{empheq}[left=\empheqlbrace]{align}
\label{eqVector2}
-\frac{1}{2} L^* L W & = - \dot{u} \cdot du + \frac{e^{2\phi}}{2} d \tau,\\
\label{eqLichnerowicz2}
\Delta \phi + e^{-2\phi} \left(\frac{1}{2} \dot{u}^2 + \frac{1}{2} \left|\sigma + LW\right|^2\right) & = e^{2\phi} \frac{\tau^2}{4} - \frac{1}{2}\left(1 + \left|\nabla u\right|^2\right),
\end{empheq}
\end{subequations}
where $L^*$ is the formal $L^2$-adjoint of $L$:
\[-\frac{1}{2} L^* L W_j = \nabla^i LW_{ij}.\]

The equations of this system are commonly known as the conformal constraint
equations. Equation \eqref{eqVector2} is called the \emph{vector equation}
and Equation \eqref{eqLichnerowicz2} is named the \emph{Lichnerowicz equation}.

Given $u$, $\dot{u}$, $\tau$ and $\sigma$ we are going to construct solutions
to the system \eqref{eqConformalConstraints2} for the unknowns $\phi$ and $W$
without any smallness assumption on $\tau$. We follow the approach of
\cite{DahlGicquaudHumbert}. The main theorem we prove is the following:

\begin{theorem}\label{thmMain}
Given $\dot{u} \in C^0(\Sigma, \bR)$, $u \in C^1(\Sigma, \bR)$
$\tau \in W^{1, p}(\Sigma, \bR)$ and $\sigma \in W^{1, p}$ a TT-tensor,
where $p > 2$, and assuming that $\tau$ vanishes nowhere on $\Sigma$, then
at least one of the following assertions is true:
\begin{enumerate}
 \item The set of solutions $(\phi, W)$ to the system \eqref{eqConformalConstraints2}
 is non-empty and compact in $W^{2, p}(\Sigma, \bR) \times W^{2, p}(\Sigma, T^*\Sigma)$
 
 \item There exists a non-trivial solution $V \in W^{2, p}(\Sigma, T^* \Sigma)$
of the following limit equation
 \begin{equation}\label{eqLimit}
  -\frac{1}{2} L^* L W = \alpha \frac{\sqrt{2}}{2} \left|LW\right| \frac{d\tau}{|\tau|}
 \end{equation}
 for some $\alpha \in [0, 1]$.
\end{enumerate}
\end{theorem}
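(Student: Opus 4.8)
The plan is to reduce the coupled system \eqref{eqConformalConstraints2} to a single fixed-point problem for $W$, and then to run a Leray--Schauder argument whose only possible failure mode produces a non-trivial solution of the limit equation \eqref{eqLimit}. First I would solve the Lichnerowicz equation \eqref{eqLichnerowicz2} for $\phi$ with $W$ held fixed. Writing it as $\Delta\phi = e^{2\phi}\frac{\tau^2}{4} - e^{-2\phi}(\frac12\dot u^2 + \frac12|\sigma+LW|^2) - \frac12(1+|\nabla u|^2)$, the right-hand side is strictly increasing in $\phi$; since $\tau$ vanishes nowhere, $\tau^2\geq\tau_{\min}^2>0$ on the compact $\Sigma$, and $1+|\nabla u|^2\geq 1$, one obtains constant sub- and supersolutions, hence by monotone iteration and the maximum principle a unique solution $\phi=\phi_W\in W^{2,p}$. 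Elliptic regularity and the compact embedding $W^{2,p}\hookrightarrow C^1$ (valid since $p>2$ on a surface) make $W\mapsto\phi_W$ continuous, with $\phi_W$ controlled by $\|LW\|_{C^0}$; the crucial quantitative input will be matching barriers of the form $e^{2\phi_W}\sim\frac{\sqrt2\,|LW|}{|\tau|}$ obtained by balancing the two exponential terms.

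Next, since $\Sigma$ has genus $G\geq 2$ it carries no non-trivial conformal Killing field, so $L$ is injective and $-\frac12 L^*L\colon W^{2,p}\to L^p$ is an isomorphism. I would then define the compact map $\mathcal F(W)=(-\frac12 L^*L)^{-1}(-\dot u\cdot du + \frac{e^{2\phi_W}}{2}d\tau)$, whose fixed points are exactly the solutions of \eqref{eqConformalConstraints2}, and apply the Schaefer--Leray--Schauder fixed-point theorem to the family $W=s\,\mathcal F(W)$, $s\in[0,1]$. Compactness of $\mathcal F$ follows from the regularity gain in the Lichnerowicz step and the compact embedding above. If the set of such $W$ is bounded in $W^{2,p}$, the theorem yields a fixed point and, via the same a priori bound together with elliptic regularity and the compactness of $\mathcal F$, compactness of the full solution set, which is alternative (1).

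The heart of the argument, and the main obstacle, is the blow-up analysis producing alternative (2). Suppose no uniform bound holds: there are $s_n\in[0,1]$ and solutions of $-\frac12 L^*LW_n = s_n(-\dot u\cdot du + \frac{e^{2\phi_{W_n}}}{2}d\tau)$ with $\gamma_n:=\|LW_n\|_{C^0}\to\infty$. Rescaling $\overline W_n:=W_n/\gamma_n$ gives $\|L\overline W_n\|_{C^0}=1$ and $-\frac12 L^*L\overline W_n = \frac{s_n}{\gamma_n}(-\dot u\cdot du) + \frac{s_n e^{2\phi_{W_n}}}{2\gamma_n}d\tau$, whose first term tends to $0$ in $L^p$. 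The barriers from the first step give $e^{2\phi_{W_n}}\leq C(1+\gamma_n)$, so $e^{2\phi_{W_n}}/\gamma_n$ is bounded. Elliptic bootstrapping then upgrades a subsequence $\overline W_n\to V$ in $W^{2,p}$, and passing to the limit in the rescaled vector equation yields $-\frac12 L^*LV = \alpha\frac{\sqrt2}{2}|LV|\frac{d\tau}{|\tau|}$ with $\|LV\|_{C^0}=1$, so $V$ is the required non-trivial solution of \eqref{eqLimit}.

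The delicate point, which I expect to be the principal difficulty, is identifying the weak limit of $e^{2\phi_{W_n}}/\gamma_n$ as precisely $\alpha\frac{\sqrt2\,|LV|}{|\tau|}$ for a single constant $\alpha\in[0,1]$. This requires pushing the Lichnerowicz equation through the rescaling and showing that the only balance surviving is the one between $e^{2\phi}\frac{\tau^2}{4}$ and $e^{-2\phi}\frac12|LW|^2$, the terms $\dot u^2$, $|\nabla u|^2$ and the constant being lower order after division by $\gamma_n$. One must also rule out concentration of $\phi_{W_n}$, so that the limiting ratio is indeed constant and, crucially, that $V\neq 0$ (guaranteed here by $\|LV\|_{C^0}=1$). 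Controlling this nonlinear balance and the value of $\alpha$, rather than the fixed-point machinery, is where the real work lies.
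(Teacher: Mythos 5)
Your overall architecture matches the paper's: solve the Lichnerowicz equation by sub/supersolutions for fixed $W$ (Proposition \ref{propLichnerowicz}), invert $-\frac{1}{2}L^*L$ (Proposition \ref{propVector}), close the loop with Schaefer's fixed-point theorem, and extract the limit equation from an unbounded sequence by rescaling. The differences are cosmetic: the paper takes $v=e^\phi\in C^0$ as the fixed-point unknown rather than $W$, and normalizes the blow-up by $\gamma_i=\|e^{\phi_i}\|_{L^\infty}$ with $\Wtil_i=W_i/\gamma_i^2$ rather than by $\|LW_n\|_{C^0}$; both normalizations are ultimately equivalent because the Lichnerowicz balance forces $e^{2\phi}$ and $|LW|$ to be comparable, and your version has the mild advantage of making $\|LV\|_{C^0}=1$ immediate.

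However, there is a genuine gap: the step you flag as ``the delicate point'' --- proving that $e^{2\phi_{W_n}}/\gamma_n$ converges in $L^\infty$ to $\sqrt{2}\,|LV|/|\tau|$ times a constant --- is not an afterthought but is the entire content of the paper's Proposition \ref{propLimit}, and you give no argument for it. Two concrete issues arise when one tries to carry it out. First, the upper bound $e^{2\phi_{W_n}}\leq C(1+\gamma_n)$ obtained from the maximum principle only gives boundedness of the rescaled conformal factor, not identification of its limit; to pass to the limit in the vector equation you need genuine $L^\infty$ convergence, which the paper obtains by constructing, for each $\epsilon>0$, explicit super- and subsolutions $\psi_\pm=\frac{1}{2}\log f_\pm$ of the rescaled Lichnerowicz equation with $f_\infty^2-\epsilon\leq f_-^2$ and $f_+\leq f_\infty+\epsilon$. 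Second, and more seriously, the lower barrier is problematic precisely where $f_\infty=\sqrt{2}|L\Wtil_\infty|/|\tau|$ vanishes: there $\psi_-=-\infty$ and a naive subsolution argument breaks down. The paper handles this by restricting to the open set $\cA=\{f_->0\}$, choosing $f_-$ to vanish like $\epsilon e^{-1/r}$ near $\partial\cA$ so that $e^{2\psi_-}\Delta\psi_-$ stays bounded, and using that $\psi_--\psi_i\to-\infty$ at $\partial\cA$ to justify the maximum principle on $\cA$. Without this (or an equivalent device), your blow-up argument only shows that the rescaled right-hand sides converge weakly to \emph{some} limit measure multiplied by $d\tau$, and you cannot conclude that $V$ solves the specific equation \eqref{eqLimit} with the factor $|LV|/|\tau|$. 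Related to this, your claim that ``elliptic bootstrapping upgrades a subsequence $\overline{W}_n\to V$ in $W^{2,p}$'' is circular as stated: strong $W^{2,p}$ convergence requires $L^p$ convergence of the right-hand side, which is exactly the conformal-factor convergence you have not yet established; one should instead work with the $C^1$ convergence coming from the compact embedding, which suffices.
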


\begin{remark}
Since the surface $\Sigma$ is of genus $G\geq2$, there is no conformal Killing
vector fields on $\Sigma$. Therefore $LW \equiv 0$ imply $W \equiv 0$. In particular,
there cannot be any non-zero solution to \eqref{eqLimit} with $\alpha = 0$,
since in this case we would have
\[
0 = \int_\Sigma \left\< W, -\frac{1}{2} L^* L W\right\> d\mu^{g_0} = -\frac{1}{2} \int_\Sigma \left|LW\right|^2 d\mu^{g_0}, 
\]
which immediately implies that $W$ is a conformal Killing vector field.
\end{remark}

The proof of this theorem is the subject of Section \ref{secProof}.

\begin{corollary}\label{corNearCMC}
 Assume that the mean curvature $\tau$ is such that
 \[\left\|\frac{d\tau}{\tau}\right\|_{L^\infty(\Sigma, T^* \Sigma)} < 1\]
 then there exists a solution to the conformal constraint equations
 \eqref{eqConformalConstraints}.
\end{corollary}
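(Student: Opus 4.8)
The plan is to invoke Theorem~\ref{thmMain}: since alternative~(1) already furnishes a solution of \eqref{eqConformalConstraints}, it suffices to show that the hypothesis $\left\|\frac{d\tau}{\tau}\right\|_{L^\infty}<1$ rules out alternative~(2), i.e.\ that the limit equation \eqref{eqLimit} has no nontrivial solution. The Remark following Theorem~\ref{thmMain} already excludes the case $\alpha=0$, so I fix $\alpha\in(0,1]$ and suppose, for contradiction, that $V\in W^{2,p}(\Sigma,T^*\Sigma)$ is a nonzero solution of \eqref{eqLimit}.

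First I would derive an energy estimate by pairing \eqref{eqLimit} with $V$ and integrating over $\Sigma$. Exactly as in the Remark, $\int_\Sigma\left\<V,-\frac12 L^*LV\right\>d\mu^{g_0}=-\frac12\int_\Sigma|LV|^2\,d\mu^{g_0}$, while the right-hand side is controlled by Cauchy--Schwarz together with the pointwise bound $\frac{|d\tau|}{|\tau|}\le\left\|\frac{d\tau}{\tau}\right\|_{L^\infty}=:k$. Writing $\|\cdot\|$ for the $L^2(g_0)$-norm, this gives
\[
\frac12\|LV\|^2=\frac{\alpha}{\sqrt2}\int_\Sigma|LV|\,\frac{\left\<-d\tau,V\right\>}{|\tau|}\,d\mu^{g_0}\le\frac{k}{\sqrt2}\,\|LV\|\,\|V\|,
\]
hence $\|LV\|\le\sqrt2\,k\,\|V\|$, i.e.\ $\|LV\|^2\le 2k^2\|V\|^2$.

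The main obstacle, and the step that fixes the sharp threshold $1$, is a matching lower bound for $\|LV\|$. The crude estimate $\|LV\|\ge\|V\|$ would only yield the weaker range $k<\tfrac1{\sqrt2}$; instead I would use a Bochner/Weitzenb\"ock computation adapted to the genus $G\ge2$ surface with $\scal_0\equiv-1$, so that $\ric_{g_0}=-\tfrac12 g_0$. Integrating by parts twice and commuting derivatives, the divergence terms cancel and one obtains
\[
\int_\Sigma|LV|^2\,d\mu^{g_0}=2\int_\Sigma|\nabla V|^2\,d\mu^{g_0}-2\int_\Sigma\ric_{g_0}(V,V)\,d\mu^{g_0}=2\int_\Sigma|\nabla V|^2\,d\mu^{g_0}+\int_\Sigma|V|^2\,d\mu^{g_0}.
\]
Now the Weitzenb\"ock identity for $1$-forms reads $\nabla^*\nabla=\Delta_H+\tfrac12$, where $\Delta_H$ is the Hodge--de Rham Laplacian and I have again used $\ric_{g_0}=-\tfrac12 g_0$; since $\Delta_H\ge0$ this forces $\int_\Sigma|\nabla V|^2\,d\mu^{g_0}\ge\tfrac12\int_\Sigma|V|^2\,d\mu^{g_0}$, whence $\|LV\|^2\ge 2\|V\|^2$. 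Combining with the energy estimate yields $2\|V\|^2\le\|LV\|^2\le 2k^2\|V\|^2$, which forces $k\ge1$ unless $V\equiv0$. Since $k<1$ by hypothesis we conclude $V\equiv0$, contradicting nontriviality, so alternative~(2) cannot occur and Theorem~\ref{thmMain} delivers the desired solution. The delicate point throughout is the sharp constant in $\|LV\|^2\ge 2\|V\|^2$: it is precisely the identity $\nabla^*\nabla=\Delta_H+\tfrac12$ on the hyperbolic surface that upgrades the naive threshold $\tfrac1{\sqrt2}$ to the stated value $1$.
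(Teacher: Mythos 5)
Your proof is correct and it reaches the sharp threshold, but the way you close the estimate is genuinely different from the paper's. Both arguments pair the limit equation with $W$ and both ultimately rest on the identity
\[
\int_\Sigma|LW|^2\,d\mu^{g_0}=2\int_\Sigma|\nabla W|^2\,d\mu^{g_0}+\int_\Sigma|W|^2\,d\mu^{g_0},
\]
which follows from the Bochner formula \eqref{eqBochner} and $\ric_{g_0}=-\tfrac12 g_0$ (it is exactly the computation of the bilinear form $a$ in Proposition \ref{propVector}). The paper then estimates the right-hand side of the paired equation \emph{pointwise}, using $|LW|\le 2|\nabla W|$ together with Young's inequality $ab\le\tfrac{a^2}{2}+\tfrac{b^2}{2}$ with $a=\sqrt2\,|\nabla W|$, so that the gradient term is absorbed exactly into the left-hand side and only $\tfrac12\|W\|_{L^2}^2\le\tfrac{\alpha}{2}\|d\tau/\tau\|_{L^\infty}^2\|W\|_{L^2}^2$ survives. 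You instead use global Cauchy--Schwarz to obtain $\|LW\|_{L^2}\le\sqrt2\,k\,\|W\|_{L^2}$ and then prove the matching coercivity bound $\|LW\|_{L^2}^2\ge2\|W\|_{L^2}^2$ by invoking the Weitzenb\"ock formula $\Delta_H=\nabla^*\nabla+\ric$ on $1$-forms, which on the hyperbolic surface gives $\|\nabla W\|_{L^2}^2\ge\tfrac12\|W\|_{L^2}^2$; your signs and constants here are all correct. Both routes are valid and yield the constant $1$. Yours requires the additional input of Hodge theory (non-negativity of $\Delta_H$), whereas the paper's absorption trick needs nothing beyond the Bochner identity already established and the elementary pointwise inequality $|LW|\le2|\nabla W|$ (which, like your integral identity, is best justified by noting that in dimension $2$ one has $LW=2\mathring{S}$ with $\mathring{S}$ the trace-free symmetric part of $\nabla W$). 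What your version buys is a cleaner structural statement: you have effectively computed a lower bound for the spectrum of $L^*L$ against the $L^2$-norm, namely $\|LW\|_{L^2}^2\ge2\|W\|_{L^2}^2$, which is slightly more information than the corollary needs and makes transparent why the naive bound $\|LW\|\ge\|W\|$ would only give the threshold $1/\sqrt2$.
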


See Section \ref{secEx} for the proof of this corollary.

\section{Proof of Theorem \ref{thmMain}}\label{secProof}

Before tackling the full system of equations in Subsection \ref{secCoupled},
we first study the properties of each equation individually, in Subsection
\ref{secVector} for the vector equation and in Subsection
\ref{secLichnerowicz} for the Lichnerowicz equation.

\subsection{The vector equation}\label{secVector}

The main result about Equation \eqref{eqVector} is the following:

\begin{proposition}\label{propVector}
 Given a 1-form $Y \in L^p(\Sigma, T^*\Sigma)$, there exists a unique
 $W \in W^{2, p}(\Sigma, T^*\Sigma)$ such that \[-\frac{1}{2} L^* L W = Y.\]
 Moreover, $W$ satisfies
\[\left\|W\right\|_{W^{2, p}(\Sigma, T^*\Sigma)} \lesssim \left\|Y\right\|_{L^p(\Sigma, T^*\Sigma)}.\]
\end{proposition}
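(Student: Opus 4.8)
The plan is to establish existence, uniqueness, and the elliptic estimate for the operator $-\frac{1}{2}L^*L$, which is a second-order self-adjoint operator acting on 1-forms over the closed surface $\Sigma$. I would first record that, by definition, $-\frac{1}{2}L^*LW_j = \nabla^i LW_{ij}$, so integration by parts against $W$ yields
\[
\int_\Sigma \left\langle W, -\tfrac{1}{2}L^*LW\right\rangle\,d\mu^{g_0} = \tfrac{1}{2}\int_\Sigma |LW|^2\,d\mu^{g_0}.
\]
This identity is the workhorse. Since $\Sigma$ has genus $G\geq 2$, it carries no nontrivial conformal Killing fields, so $LW\equiv 0$ forces $W\equiv 0$; thus the kernel of $-\frac{1}{2}L^*L$ is trivial.

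Next I would verify that $-\frac{1}{2}L^*L$ is a second-order elliptic operator in the sense of having injective (indeed surjective) principal symbol. The leading part is, up to lower-order curvature terms, the rough Laplacian $\nabla^*\nabla$ acting on 1-forms plus a gradient-of-divergence term, and a symbol computation shows ellipticity. Being a formally self-adjoint elliptic operator with trivial kernel on a compact manifold, its $L^2$-cokernel also vanishes (cokernel is isomorphic to the kernel of the adjoint, which equals the kernel here), so the operator has Fredholm index zero and is in fact an isomorphism at the level of the relevant Sobolev spaces.

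For the quantitative statement I would invoke the standard $L^p$ elliptic regularity theory (Agmon–Douglis–Nirenberg / Calderón–Zygmund estimates) for elliptic systems on compact manifolds: given $Y\in L^p$, the equation $-\frac{1}{2}L^*LW = Y$ admits a solution $W\in W^{2,p}$ with the a priori bound
\[
\|W\|_{W^{2,p}} \lesssim \|-\tfrac{1}{2}L^*LW\|_{L^p} + \|W\|_{L^p} = \|Y\|_{L^p} + \|W\|_{L^p}.
\]
The remaining task is to absorb the $\|W\|_{L^p}$ term on the right, which follows from the triviality of the kernel by a standard compactness–contradiction argument: if no such absorbed estimate held, one could extract a sequence $W_n$ with $\|W_n\|_{W^{2,p}}=1$ but $\|{-\tfrac{1}{2}L^*LW_n}\|_{L^p}\to 0$, which by the compact embedding $W^{2,p}\hookrightarrow W^{1,p}$ (valid since $\Sigma$ is compact and $p>2$) and elliptic regularity converges to a nonzero kernel element, a contradiction. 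Existence in $W^{2,p}$ then follows from the isomorphism property (solve first in $L^2$ or $W^{2,2}$ by self-adjointness, then bootstrap regularity using the estimate), and uniqueness is immediate from the trivial kernel.

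The main obstacle I anticipate is purely the ellipticity verification and the bookkeeping of the $L^p$ theory for an overdetermined-looking but genuinely elliptic operator on 1-forms: one must confirm that $L^*L$ has the correct principal symbol despite $L$ itself being the (non-elliptic) conformal Killing operator, and then ensure the ADN-type estimates apply uniformly. The analytic input is entirely classical; the only genuinely geometric fact being used is the absence of conformal Killing fields in genus $G\geq 2$, which guarantees the kernel is trivial and hence converts the Fredholm statement into an isomorphism.
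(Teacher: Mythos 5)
Your argument is correct in outline, but it follows a genuinely different route from the paper's. The paper first derives the Bochner formula \eqref{eqBochner}, namely $-\frac{1}{2}L^*LW_j = \Delta W_j + \ric_{ij}W^i = \Delta W_j - \frac{1}{2}W_j$ (using that in dimension $2$ one has $\ric = \frac{\scal}{2}g_0$ and that $g_0$ was normalized so that $\scal_0 \equiv -1$), whence the bilinear form
\[
a(V,W) = \int_\Sigma \left\langle LV, LW\right\rangle d\mu^{g_0} = \int_\Sigma \left(2\left\langle \nabla V, \nabla W\right\rangle + \left\langle V, W\right\rangle\right) d\mu^{g_0}
\]
is manifestly continuous and coercive on $W^{1,2}$; Lax--Milgram then gives existence and uniqueness in one stroke, and elliptic regularity upgrades the solution to $W^{2,p}$ with the stated bound. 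You instead establish kernel triviality from the absence of conformal Killing fields in genus $G\geq 2$, pass through self-adjoint Fredholm theory to get invertibility, and absorb the $\left\|W\right\|_{L^p}$ term in the Agmon--Douglis--Nirenberg estimate by a compactness--contradiction argument. Your route is more general (it works on any compact manifold without conformal Killing fields, with no sign condition on the curvature), but it is heavier: it replaces a two-line coercivity computation by Fredholm theory plus a compactness argument, and it forgoes the explicit Bochner identity, which the paper reuses in Section \ref{secEx} to prove Corollary \ref{corNearCMC}. One small slip to fix: with the paper's convention $-\frac{1}{2}L^*LW_j = \nabla^i LW_{ij}$, integration by parts gives
\[
\int_\Sigma \left\langle W, -\tfrac{1}{2}L^*LW\right\rangle d\mu^{g_0} = -\tfrac{1}{2}\int_\Sigma \left|LW\right|^2 d\mu^{g_0},
\]
with a minus sign, not a plus sign as you wrote; this does not affect your kernel argument (the conclusion $LW\equiv 0$ still follows when the left-hand side vanishes), but the sign should be corrected.
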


\begin{proof}
We can write
\begin{align}
 -\frac{1}{2} L^* L W_j
 & = \nabla^i\left(\nabla_i W_j + \nabla_j W_i - \nabla^k W_k g_{0ij}\right)\nonumber\\
 & = \Delta W_j + \nabla^i \nabla_j W_i - \nabla_j \nabla^i W_i\nonumber\\
 & = \Delta W_j + \ricdd{i}{j} W^i\nonumber\\
-\frac{1}{2} L^* L W_j
 & = \Delta W_j - \frac{1}{2} W_j,\label{eqBochner}
\end{align}
where we used the fact that in dimension 2, $\ric = \frac{\scal}{2} g_{0ij}$.
This Bochner formula will be useful in Section \ref{secEx}.

On $W^{1, 2}(\Sigma, T^*\Sigma)$, we introduce the following bilinear
form
\[a(V, W) \definedas \int_\Sigma \left\<LV, LW\right\> d\mu^{g_0}.\]
We have
\begin{align*}
a(V, W)
 & = \int_\Sigma \left\<V, L^*LW\right\> d\mu^{g_0}\\
 & = -2 \int_\Sigma \left\<V, \Delta W - \frac{1}{2} W\right\> d\mu^{g_0}\\
 & = \int_\Sigma \left(2\left\<\nabla V, \nabla W\right\> + \left\<V, W\right\>\right) d\mu^{g_0}\\
\end{align*}

It follows immediately that the bilinear form $a$ satisfies the assumptions
of the Lax-Milgram theorem: it is continuous and coercive. So given
$Y \in L^p(\Sigma, T^*\Sigma) \subset \left(W^{1, 2}(\Sigma, T^*\Sigma)\right)^*$
there exists a unique
$W \in W^{1, 2}(\Sigma, T^*\Sigma)$ such that $-\frac{1}{2} L^* L W = Y$.
It follows from elliptic regularity that $W \in W^{2, p}(\Sigma, T^*\Sigma)$
and that
$\left\|W\right\|_{W^{2, p}(\Sigma, T^*\Sigma)} \lesssim \left\|Y\right\|_{L^p(\Sigma, T^*\Sigma)}$.
\end{proof}

In particular, we get the following result:

\begin{proposition}\label{propYork}
Given a symmetric traceless tensor $H \in W^{1, p}$, there exist a unique
TT-tensor $\sigma$ and a unique 1-form $W$ such that \[H = \sigma + LW.\]
\end{proposition}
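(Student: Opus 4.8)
The plan is to reduce the existence and uniqueness of the York decomposition directly to the solvability result for the vector equation established in Proposition \ref{propVector}. The key preliminary observation is that in dimension $2$ the conformal Killing operator $L$ automatically produces traceless tensors: taking the $g_0$-trace of $LW_{ij} = \nabla_i W_j + \nabla_j W_i - \nabla^k W_k g_{0ij}$ gives $2\nabla^k W_k - 2 \nabla^k W_k = 0$. Hence, for any choice of $W$, the tensor $\sigma \definedas H - LW$ is automatically traceless because $H$ is. Consequently the only nontrivial requirement to impose on $\sigma$ is the transverse condition $\nabla^i \sigma_{ij} = 0$.

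I would then rewrite this transverse condition in terms of $W$. Using the identity $\nabla^i LW_{ij} = -\frac{1}{2} L^* L W_j$ from the definition of $L^*$, one has $\nabla^i \sigma_{ij} = \nabla^i H_{ij} + \frac{1}{2} L^* L W_j$, so requiring $\sigma$ to be transverse is exactly the equation
\[
-\frac{1}{2} L^* L W_j = \nabla^i H_{ij}.
\]
Since $H \in W^{1, p}$, its divergence $\nabla^i H_{ij}$ lies in $L^p(\Sigma, T^*\Sigma)$, so Proposition \ref{propVector} supplies a unique $W \in W^{2, p}(\Sigma, T^*\Sigma)$ solving this equation. Setting $\sigma \definedas H - LW$ then yields a tensor which is traceless (automatically) and transverse (by construction), and which inherits $W^{1, p}$ regularity since $H \in W^{1, p}$ and $LW \in W^{1, p}$. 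This establishes existence.

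For uniqueness, suppose $H = \sigma_1 + LW_1 = \sigma_2 + LW_2$ are two such decompositions. Then $\sigma_1 - \sigma_2 = L(W_2 - W_1)$, and taking the divergence while using that both $\sigma_1$ and $\sigma_2$ are transverse gives
\[
-\frac{1}{2} L^* L (W_2 - W_1)_j = \nabla^i (\sigma_1 - \sigma_2)_{ij} = 0.
\]
Since $W_2 - W_1 \in W^{2, p}$ and $W = 0$ also solves $-\frac{1}{2} L^* L W = 0$, the uniqueness part of Proposition \ref{propVector} forces $W_1 = W_2$, and hence $\sigma_1 = \sigma_2$.

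I do not expect a genuine obstacle here: the proposition is essentially a corollary of Proposition \ref{propVector}, the only content being the reformulation of the transverse condition as the vector equation with source $\nabla^i H_{ij}$. The two points deserving care are the verification that the trace-free condition is handled for free in dimension $2$ and the bookkeeping of the $W^{1, p}$ regularity of $\sigma$, both of which are immediate. It is worth remarking that the use of Proposition \ref{propVector} implicitly relies on the absence of conformal Killing vector fields on $\Sigma$ (guaranteed by $G \geq 2$), which is precisely what makes $-\frac{1}{2} L^* L$ invertible and the decomposition unique.
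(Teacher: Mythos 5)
Your argument is correct and follows essentially the same route as the paper: solve $-\frac{1}{2}L^*LW = \divg_{g_0} H$ via Proposition \ref{propVector} and set $\sigma = H - LW$. You additionally spell out the automatic tracelessness of $LW$ in dimension $2$ and the uniqueness of the decomposition, which the paper leaves implicit; the only quibble is your closing remark, since the invertibility of $-\frac{1}{2}L^*L$ here comes from the coercivity supplied by the Bochner formula with $\scal_0 \equiv -1$, of which the absence of conformal Killing fields is a consequence rather than a hypothesis.
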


\begin{proof}
From the previous proposition, there exists a unique solution $W \in W^{2, p}$ of
\[-\frac{1}{2} L^* L W = \divg_{g_0} H.\]
Setting $\sigma = H - LW$, we have
\[
\divg_{g_0} \sigma = \divg_{g_0} H - \divg_{g_0} LW
= \divg_{g_0} H + \frac{1}{2} L^* L W = 0.
\]
Therefore, $\sigma$ is a TT-tensor.
\end{proof}

\subsection{The Lichnerowicz equation}\label{secLichnerowicz}

The aim of this section is to prove the following proposition :

\begin{prop}\label{propLichnerowicz}
Let $\dot{u}$, $u$ and $\tau$ be given as in Theorem \ref{thmMain}. For any
given symmetric traceless 2-tensor $H \in L^\infty$, there exists a unique
positive function $\phi \in W^{2, p}(\Sigma, \bR)$ solving Equation
\eqref{eqLichnerowicz}. Further $\phi$ depends continuously on
$H \in C^0$ and is bounded from below by a positive constant $\phi_0$
which is independent of $H$.
\end{prop}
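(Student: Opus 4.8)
The plan is to treat \eqref{eqLichnerowicz} as a scalar semilinear elliptic equation whose nonlinearity is monotone in the unknown, and to exploit this monotonicity for both uniqueness and existence. Rearranging, the equation reads
\[
\Delta \phi = \frac{\tau^2}{4} e^{2\phi} - \left(\frac{1}{2}\dot{u}^2 + \frac{1}{2}|H|^2\right) e^{-2\phi} - \frac{1}{2}\left(1 + |\nabla u|^2\right) =: f(x, \phi),
\]
where $\Delta$ is the Laplace--Beltrami operator of $g_0$. The three structural facts I would rely on are: $f(x, \cdot)$ is strictly increasing in $\phi$, its $\phi$-derivative being $\frac{\tau^2}{2} e^{2\phi} + (\dot{u}^2 + |H|^2) e^{-2\phi} > 0$ since $\tau$ vanishes nowhere; the coefficient $\tau^2/4$ of $e^{2\phi}$ is bounded below by a positive constant $c_0$, again because $\tau$ is nowhere zero on the compact surface $\Sigma$; and the zeroth-order term $-\frac{1}{2}(1 + |\nabla u|^2)$ is bounded above by $-\frac{1}{2} < 0$.

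For uniqueness I would argue by the maximum principle. If $\phi_1$ and $\phi_2$ both solve the equation, then at a point $x_0$ where $\phi_1 - \phi_2$ attains its maximum one has $\Delta(\phi_1 - \phi_2)(x_0) \leq 0$, hence $f(x_0, \phi_1(x_0)) \leq f(x_0, \phi_2(x_0))$; the strict monotonicity of $f$ in its second argument then forces $\phi_1(x_0) \leq \phi_2(x_0)$, so $\phi_1 \leq \phi_2$ everywhere, and the symmetric argument at the minimum gives the reverse inequality, whence $\phi_1 = \phi_2$.

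For existence I would use the method of sub- and super-solutions with \emph{constant} barriers. Since $\tau^2/4 \geq c_0 > 0$ while $\dot{u}$, $|H|$ and $|\nabla u|$ are bounded, $f(x, c) \to +\infty$ uniformly in $x$ as $c \to +\infty$, so a sufficiently large constant $\phi_+$ satisfies $\Delta \phi_+ = 0 \leq f(x, \phi_+)$ and is a supersolution; conversely, as $c \to -\infty$ the $e^{2c}$ term vanishes and the remaining $-e^{-2c}$ term is nonpositive, so $f(x, c) \leq -\frac{1}{2}$ eventually, and a sufficiently negative constant $\phi_-$ is a subsolution, with $\phi_- \leq \phi_+$. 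The standard monotone-iteration argument then yields a solution $\phi$ with $\phi_- \leq \phi \leq \phi_+$, and elliptic regularity upgrades it to $W^{2,p}(\Sigma, \bR)$ using the $W^{1,p}$, $C^1$ and $C^0$ regularity of the data.

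The uniform lower bound and the continuous dependence are where I expect the only real care to be needed. Evaluating the equation at a point $x_0$ where $\phi$ attains its minimum gives $0 \leq \Delta \phi(x_0) \leq \frac{\tau^2(x_0)}{4} e^{2\phi(x_0)} - \frac{1}{2}$, where discarding the nonpositive $-e^{-2\phi}$ term only strengthens the inequality; this yields $e^{2\phi(x_0)} \geq 2/\|\tau\|_{L^\infty}^2$ and hence a lower bound $\phi \geq \phi_0$ depending only on $\tau$, in particular independent of $H$. For continuity in $H$, the constant barriers above can be chosen uniformly for $H$ in a bounded subset of $C^0$, so if $H_n \to H$ in $C^0$ the solutions $\phi_n$ are uniformly bounded in $L^\infty$, hence uniformly bounded in $W^{2,p}$ by elliptic estimates; any limit point solves the equation for $H$, and by the uniqueness already established it must equal $\phi$, so the whole sequence converges. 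The main obstacle is thus purely bookkeeping: arranging the a priori bounds so that they are manifestly uniform in $H$, which the monotone structure of $f$ together with the strict positivity of $\tau^2$ makes available.
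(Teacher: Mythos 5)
Your proposal is correct and follows essentially the same route as the paper: rewrite \eqref{eqLichnerowicz} as $\Delta\phi = f(x,\phi)$ with $f$ strictly increasing in $\phi$ (using that $\tau$ vanishes nowhere), take constant sub- and super-solutions — the lower one depending only on $\tau$, which is what makes the bound $\phi_0$ independent of $H$ — and derive uniqueness and continuity from the monotonicity; the paper packages this as Lemma \ref{lmSubSuperSolutions}, proved via a Schauder fixed point rather than monotone iteration, and obtains continuity in the parameter by a direct comparison estimate rather than your compactness-plus-uniqueness argument, but these are interchangeable. The only point to tighten is that $\phi$ is merely $W^{2,p}$ (and $H$ merely $L^\infty$), so $\Delta\phi$ is an $L^p$ function that cannot be evaluated at a single extremal point; your maximum-point arguments for uniqueness and for the lower bound should therefore be run in integral form (write $\Delta w = c(x)\,w$ with $c \geq \alpha > 0$, multiply by $w$ or by its positive/negative part and integrate over $\Sigma$, as the paper does), which gives the same conclusions.
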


Before proving the proposition, we need to recall a general lemma on
semilinear elliptic equations. This is a simple version of the so-called
sub and super-solution method we took from \cite[Chapter 14]{Taylor3}.

\begin{lemma}\label{lmSubSuperSolutions}
 Given an open interval $I \subset \bR$, we consider the following equation
 for $\phi$ on $\Sigma$:
 \begin{equation}\label{eqSemilinear}
  \Delta \phi = f(x, \phi, \lambda),
 \end{equation}
where $\lambda \in \Lambda$ is a parameter belonging to $\Lambda$, an open subset of
Banach space, and $f$ is a function belonging to $C^0(\Sigma, \bR) \otimes C^1(I \times \Lambda, \bR)$,
i.e. $f$ decomposes as a finite sum \[f = \sum_i a_i(x) f_i(\phi, \lambda),\] where
$a_i \in C^0(\Sigma, \bR)$ and $f_i \in C^1(I \times \Lambda, \bR)$. We assume further that
 \begin{itemize}
  \item $\pdiff{f}{\phi} > 0$,
  \item there exist constants $a_0, a_1 \in I$ (that may depend continuously on
  $\lambda$), $a_0 \leq a_1$, such that, for all $x \in \Sigma$, $f(x, a_0, \lambda) < 0$
  and $f(x, a_1, \lambda) > 0$.
 \end{itemize}
 Then the equation \eqref{eqSemilinear} admits a unique solution
 $\phi \in W^{2,p}(\Sigma, \bR)$, $2 < p <\infty$, for all $\lambda \in \Lambda$.
 Further, $\phi$ depends continuously on $\lambda$.
\end{lemma}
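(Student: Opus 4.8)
The plan is to solve $\Delta\phi = f(x,\phi,\lambda)$ for each fixed $\lambda$ by the monotone iteration (sub/super\-solution) scheme, to extract uniqueness from the strict monotonicity $\pdiff{f}{\phi}>0$ via the maximum principle, and to deduce continuity in $\lambda$ from a linearized elliptic estimate. The whole argument is standard (it is the version of the sub/super\-solution method cited from \cite{Taylor3}), so I would only sketch the mechanism. The key observation is that the two constants play the roles of ordered sub- and supersolutions: since $\Sigma$ is closed and $\Delta$ denotes the (nonpositive) Laplace--Beltrami operator, so that $\Delta\phi\le 0$ at an interior maximum, one has $\Delta a_0 = 0 > f(\cdot,a_0,\lambda)$ and $\Delta a_1 = 0 < f(\cdot,a_1,\lambda)$, with $a_0 \le a_1$.

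For existence I would fix $\lambda$ and set $K \definedas \sup_{\Sigma\times[a_0,a_1]}\pdiff{f}{\phi}$, which is finite by compactness, so that $\phi\mapsto K\phi - f(x,\phi,\lambda)$ is nondecreasing on $[a_0,a_1]$. Starting from $\phi_0\equiv a_0$, I define $\phi_{n+1}$ as the solution of $(K-\Delta)\phi_{n+1} = K\phi_n - f(x,\phi_n,\lambda)$; the operator $K-\Delta$ is invertible on a closed surface and obeys the maximum principle (if $(K-\Delta)v\ge0$ then $v\ge0$, by evaluating at an interior minimum). Using this together with the monotonicity of the right-hand side one checks inductively that $a_0\le\phi_0\le\phi_1\le\cdots\le a_1$. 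The sequence is monotone and bounded, hence converges pointwise; since the $\phi_n$ are uniformly bounded in $L^\infty$, the right-hand sides are bounded in $L^p$ and elliptic regularity gives uniform $W^{2,p}$ bounds, so that, passing to the limit and using $W^{2,p}\hookrightarrow C^0$ in dimension $2$ with $p>2$, the limit $\phi\in W^{2,p}$ solves the equation and satisfies $a_0\le\phi\le a_1$.

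Uniqueness follows directly from $\pdiff{f}{\phi}>0$. If $\phi_1,\phi_2$ both solve the equation, then $w\definedas\phi_1-\phi_2$ satisfies $\Delta w = c(x)\,w$ with $c(x)=\int_0^1 \pdiff{f}{\phi}\bigl(x,\phi_2+t(\phi_1-\phi_2),\lambda\bigr)\,dt>0$. Evaluating at an interior maximum of $w$ gives $0\ge\Delta w = c\,w$, hence $w\le0$ there and thus everywhere; the symmetric argument at a minimum gives $w\ge0$, so $w\equiv0$.

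For continuous dependence I would compare the (now unique) solutions $\phi(\lambda)$ and $\phi(\lambda')$. Writing $w = \phi(\lambda)-\phi(\lambda')$ and splitting $f(x,\phi(\lambda),\lambda)-f(x,\phi(\lambda'),\lambda')$ into a $\phi$-increment and a $\lambda$-increment, one obtains $(\Delta - c(x))\,w = r(x)$, where $c(x)\ge c_0>0$ is an integrated derivative as above and $r(x)=f(x,\phi(\lambda'),\lambda)-f(x,\phi(\lambda'),\lambda')$. I expect this last step to be the only delicate one: one must ensure that the solutions stay in a fixed compact interval as $\lambda$ varies, so that $c$ keeps a uniform positive lower bound and the family of operators $\Delta - c$ is uniformly invertible from $W^{2,p}$ to $L^p$; this is precisely where the freedom to choose $a_0,a_1$ depending continuously on $\lambda$ is used. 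Granting this, continuity of $f$ in $\lambda$ (uniform on compacta) gives $\|r\|_{L^p}\to0$ as $\lambda'\to\lambda$, whence $\|w\|_{W^{2,p}}\lesssim\|r\|_{L^p}\to0$. Alternatively one could replace this estimate by the implicit function theorem, using that $\Delta - \pdiff{f}{\phi}$ is an isomorphism $W^{2,p}\to L^p$ and that the Nemytskii operator attached to the $C^1$ nonlinearity $f$ is $C^1$ on $C^0$, which would even yield $C^1$ dependence of $\phi$ on $\lambda$.
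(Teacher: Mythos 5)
Your proposal is correct, but it diverges from the paper's argument at two places, both in a legitimate way. For existence, you run the classical monotone iteration $(K-\Delta)\phi_{n+1}=K\phi_n-f(x,\phi_n,\lambda)$ starting from the constant subsolution $a_0$, whereas the paper applies the Schauder fixed point theorem to the map $\phi_0\mapsto\phi_1$ defined by the very same linear equation on the order interval $\Omega=\{a_0\le\phi\le a_1\}$, using elliptic regularity plus Rellich to get compactness of the image. The mechanism (choosing the constant larger than $\sup\pdiff{f}{\phi}$ so that $\phi\mapsto K\phi-f$ is monotone, and the maximum principle for $K-\Delta$) is identical; your route is more constructive and avoids Schauder, the paper's avoids verifying the monotonicity of the whole iteration chain. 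One small caveat on your side: your pointwise maximum-principle evaluations (``$\Delta w\le0$ at an interior maximum'') are applied to functions that are a priori only $W^{2,p}$ with $p>2$, so you should either invoke a $W^{2,p}$ strong maximum principle or, as the paper does throughout, replace the pointwise evaluation by multiplying by a truncated negative/positive part and integrating, which works directly for weak solutions. For continuous dependence, you invert the linearized operator $\Delta-c$ uniformly and estimate $\|w\|_{W^{2,p}}\lesssim\|r\|_{L^p}$, while the paper tests the difference equation against $(\phi_1-\phi_0-\epsilon)_+$ to get the $C^0$ estimate $|\phi_1-\phi_0|\le\epsilon$ first and only then upgrades by elliptic regularity. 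Both hinge on exactly the same preparatory step, which you correctly identify as the delicate one: shrinking the parameter neighbourhood so that the solutions stay in a fixed interval $[a_0',a_1']$ on which $\pdiff{f}{\phi}$ has a uniform positive lower bound (the paper's $\alpha/2$). Your implicit-function-theorem remark is a valid strengthening given the stated tensor-product structure of $f$, though it is not needed for the lemma as stated.
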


\begin{proof}
 We first prove the existence of a solution for all $\lambda \in \Lambda$.
 We denote by $\Omega$ the closed subset of $C^0(M, \bR)$ defined
 by \[\Omega = \{\phi \in C^0(M, \bR), a_0 \leq \phi \leq a_1\}.\]
 We choose a constant $A = A(\lambda) > 0$ such that
 \[A > \sup_{(x, \phi) \in \Sigma \times [a_0, a_1]} \pdiff{f}{\phi}(x, \phi, \lambda)\]
 and define a map $F: \Omega \to C^0(M, \bR)$ as follows. Given
 $\phi_0 \in \Omega$, we define $F(\phi_0) \definedas \phi_1$, where
 $\phi_1 \in W^{2, p}(\Sigma, \bR)$ is the (unique) solution to the following
 linear equation:
 \[-\Delta \phi_1 + A \phi_1 = A \phi_0 - f(x, \phi_0, \lambda).\]
 We argue that $\phi_1 \in \Omega$ as follows. We have
 \begin{align*}
 -\Delta \phi_1 + A \phi_1
   & = A \phi_0(x) - f(x, \phi_0, \lambda)\\
   & = \int_{a_0}^{\phi_0(x)} \underbrace{\left(A - \pdiff{f}{\phi}(x, \phi, \lambda)\right)}_{> 0} d\phi + A a_0 - f(x, a_0, \lambda)\\
   & \geq A a_0 - f(x, a_0, \lambda)\\
   & \geq A a_0;\\
 -\Delta \left(\phi_1 -a_0\right) + A \left(\phi_1(x) - a_0\right)
   & \geq 0.
 \end{align*}
 We set $(\phi_1-a_0)_- \definedas \min\{0, \phi_1-a_0\}$. Multiplying the
 previous inequality by $(\phi_1-a_0)_-$ and integrating over $\Sigma$, we get
 \begin{align*}
 \int_\Sigma \left[-(\phi_1-a_0)_-\Delta \left(\phi_1 -a_0\right) + A \left(\phi_1(x) - a_0\right)_-^2\right] d\mu^g
   & \leq 0,\\
\int_\Sigma \left[\left|\nabla(\phi_1-a_0)_-\right|^2 + A \left(\phi_1(x) - a_0\right)_-^2\right] d\mu^g
   & \leq 0,
 \end{align*}
 from which we immediately conclude that $\left(\phi_1(x) - a_0\right)_- \equiv 0$, i.e.
 that $\phi_1 \geq a_0$. A similar argument proves that $\phi_1 \leq a_1$.
 Hence $F$ maps $\Omega$ into itself.
 
 We note that for fixed $\lambda$, $F$ maps $\Omega$ into a bounded
 subset of $W^{2, p}(\Sigma, \bR)$. This comes from the fact that $\Sigma \times [a_0, a_1]$
 is a compact set over which $f(\cdot, \cdot, \lambda)$ is continuous so
 $f(x, \phi, \lambda)$ is bounded independently of $\phi \in \Omega$ and
 $x \in \Sigma$. Hence, by elliptic regularity
 \begin{align*}
  \left\|F(\phi)\right\|_{W^{2, p}(\Sigma, \bR)}
    & \lesssim \left\|f(x, \phi, \lambda)\right\|_{L^\infty(\Sigma, \bR)}\\
    & \lesssim 1.
 \end{align*}

 Denoting by $\Omega'$ the closure of the convex hull of $F(\Omega)$, it
 follows from the Rellich theorem that $\Omega'$ is a compact convex subset
 of $C^0(\Sigma, \bR)$. By the Schauder fixed point theorem, $F$ admits
 a fixed point $\phi$. $\phi$ then satisfies
 \begin{align*}
 - \Delta \phi + A \phi & = A \phi - f(x, \phi, \lambda)\\
 \Leftrightarrow \Delta \phi & = f(x, \phi, \lambda).
 \end{align*}
 Hence $\phi$ is a solution to \eqref{eqSemilinear} and by
 elliptic regularity, $\phi \in W^{2, p}(\Sigma, \bR)$.
 
 We next prove that the solution to \eqref{eqSemilinear} is unique
 given $\lambda \in \Lambda$. It follows then that $a_0 \leq \phi \leq a_1$.
 Assume given $\phi_1, \phi_2$ two solutions to \eqref{eqSemilinear}. We have
 \begin{align*}
  0 & = - \Delta (\phi_2 - \phi_1) + f(x, \phi_2, \lambda) - f(x, \phi_1, \lambda)\\
    & = - \Delta (\phi_2 - \phi_1) + (\phi_2-\phi_1) \underbrace{\int_0^1 \pdiff{f}{\phi}(x, \phi_1 + y (\phi_2-\phi_1)) dy}_{> 0},
 \end{align*}
 from which we immediately conclude that $\phi_1 \equiv \phi_2$.
 
 We follow a similar strategy to prove that $\phi$ depends continuously
 on $\lambda$. We fix an arbitrary $\lambda_0 \in \Lambda$. There exists
 $\alpha > 0$ such that
 \[\pdiff{f}{\phi}(x, \phi, \lambda_0) \geq \alpha\]
 for all $(x, \phi) \in \Sigma \times [a_0(\lambda_0), a_1(\lambda_0)]$.
 There exist an $\eta_0 > 0$ and $a'_0, a'_1 \in I$ such that $B_{\eta_0}(\lambda_0) \subset \Lambda$,
 $a'_0 \leq a_0(\lambda)$, $a'_1 \geq a_1(\lambda)$ for all $\lambda \in B_{\eta_0}(\lambda_0)$
 and
 \[
  \pdiff{f}{\phi}(x, \phi, \lambda) > \frac{\alpha}{2}
 \]
 on $\Sigma \times [a'_0, a'_1] \times B_{\eta_0}(\lambda_0)$.
 We denote by $\phi_0$ the solution to \eqref{eqSemilinear} with $\lambda = \lambda_0$.
 
 For any $\epsilon > 0$, there exists $\eta > 0$, $\eta < \eta_0$ such that
 \[\left|f(x, \phi_0, \lambda_1) - f(x, \phi_0, \lambda_0)\right| < \frac{\epsilon\alpha}{2}\]
 for all $x \in \Sigma$ and all $\lambda \in B_\eta(\lambda_0)$.
 We denote by $\phi_1$ the solution to \eqref{eqSemilinear} with $\lambda = \lambda_1$
 for an arbitrary $\lambda_1 \in B_\eta(\lambda_0)$:
 
 \[
 \left\lbrace
 \begin{aligned}
  - \Delta \phi_0 + f(x, \phi_0, \lambda_0) & = 0\\
  - \Delta \phi_1 + f(x, \phi_1, \lambda_1) & = 0
 \end{aligned}
 \right.
 \]
 
 Subtracting both equations, we get
 \begin{align}
 0 & = - \Delta (\phi_1 - \phi_0) + f(x, \phi_1, \lambda_1) - f(x, \phi_0, \lambda_0)\nonumber\\
   & = - \Delta (\phi_1 - \phi_0) + f(x, \phi_1, \lambda_1) - f(x, \phi_0, \lambda_1) + f(x, \phi_0, \lambda_1) - f(x, \phi_0, \lambda_0\nonumber)\\
 0 & = - \Delta (\phi_1 - \phi_0) + \int_0^1 \pdiff{f}{\phi}(x, \phi_0 + y (\phi_1-\phi_0), \lambda_1) dy (\phi_1 - \phi_0) + f(x, \phi_0, \lambda_1) - f(x, \phi_0, \lambda_0).
 \label{eqDifference}
 \end{align}
 
 From our assumptions, we have
 \[
  \int_0^1 \pdiff{f}{\phi}(x, \phi_0 + y (\phi_1-\phi_0), \lambda_1) dy > \frac{\alpha}{2}.
 \]
 
 Multiplying Equation \eqref{eqDifference} by
 $(\phi_1-\phi_0-\epsilon)_+ \definedas \max\{0, \phi_1-\phi_0-\epsilon\} \geq 0$, and integrating
 over $\Sigma$, we get
 \begin{align*}
  & \int_\Sigma \left(f(x, \phi_0, \lambda_0) - f(x, \phi_0, \lambda_1)\right) (\phi_1-\phi_0-\epsilon)_+ d\mu^{g_0}\\
  & \qquad = \int_\Sigma \Big[\left\<\nabla (\phi_1-\phi_0-\epsilon)_+, \nabla (\phi_1-\phi_0-\epsilon)_+ \right\>\\
  & \qquad \qquad + \left.\int_0^1 \pdiff{f}{\phi}(x, \phi_0 + y (\phi_1-\phi_0), \lambda_1) dy (\phi_1 - \phi_0)(\phi_1-\phi_0-\epsilon)_+\right] d\mu^{g_0},\\
  & \int_\Sigma \frac{\epsilon\alpha}{2} (\phi_1-\phi_0-\epsilon)_+ d\mu^{g_0}\\
  & \qquad \geq \int_\Sigma \left[\left|\nabla (\phi_1-\phi_0-\epsilon)_+\right|^2 + \frac{\alpha}{2} (\phi_1 - \phi_0)(\phi_1-\phi_0-\epsilon)_+\right] d\mu^{g_0}\\
0 & \geq \int_\Sigma \left[\left|\nabla (\phi_1-\phi_0-\epsilon)_+\right|^2 + \frac{\alpha}{2} \left((\phi_1-\phi_0-\epsilon)_+\right)^2\right] d\mu^{g_0}
 \end{align*}
 Hence $\phi_1-\phi_0 \leq \epsilon$. Similarly,  $\phi_1-\phi_0 \geq -\epsilon$.
 This proves that the function $\Psi$ mapping $\lambda$ to $\phi$ solving
 \eqref{eqSemilinear} is continuous from $\Lambda$ to $C^0(\Sigma, I)$.
 It then follows at once from elliptic regularity that $\Psi$ is continuous
 as a mapping from $\Lambda$ to $W^{2, p}(\Sigma, \bR)$.
\end{proof}

We refer the reader to \cite[Section 6]{MaxwellRoughCompact} for much
stronger versions of the sub and super-solution method. We can now give the
proof of Proposition \ref{propLichnerowicz}:

\begin{proof}[Proof of Proposition \ref{propLichnerowicz}]
The Lichnerowicz equation \eqref{eqLichnerowicz} can be rewritten in the
form \eqref{eqSemilinear}:
\[
\Delta \phi = \underbrace{- e^{-2\phi} \left(\frac{1}{2} \dot{u}^2 + \frac{1}{2} \left|H\right|^2\right) + e^{2\phi} \frac{\tau^2}{4} - \frac{1}{2}\left(1 + \left|\nabla u\right|^2\right)}_{:= f(x, \phi)}.
\]
Since $\tau^2$ is bounded away from zero, the assumption $\pdiff{f}{\phi} > 0$
is readily checked. Choosing $a_0 \definedas - \max \ln|\tau|$, we have
\[e^{2 a_0} \frac{\tau^2}{4} \leq \frac{1}{4}.\] So
\[f(x, a_0) \leq e^{2a_0} \frac{\tau^2}{4} - \frac{1}{2}\left(1 + \left|\nabla u\right|^2\right) \leq \frac{1}{4}-\frac{1}{2}\leq-\frac{1}{4}.\]
Since $f$ is increasing with $\phi$, we immediately get that if $\phi < a_0$,
then $f(x, \phi) < 0$. Let now $a_1 \geq 0$ be such that
\[
e^{2 a_1} \frac{\min \tau^2}{4}
 > \frac{1}{2} \left(1 + \left\|\nabla u\right\|_{L^\infty}^2\right)
 + \frac{1}{2} \left\|\dot{u}\right\|_{L^\infty}^2 + \frac{1}{2} \left\|H\right\|_{L^\infty}^2.
\]
Using the fact that we choose $a_1 \geq 0$, it is a simple matter to check
that \[f(x, a_1) > 0\] and hence if $\phi > a_1$, $f(x, \phi) > 0$.

As a consequence, the Lichnerowicz equation satisfies the assumptions of
Lemma \ref{lmSubSuperSolutions}. This completes the proof of Proposition
\ref{propLichnerowicz}.
\end{proof}

\subsection{The coupled system}\label{secCoupled}

Following \cite{Nguyen}, we use Schaefer's fixed point theorem to study
the coupled system (see \cite[Chapter 11]{GilbargTrudinger}):

\begin{theorem}\label{thmSchaefer}
 Let $X$ be a Banach space and $\Phi: X \to X$ a continuous compact mapping.
 Assume that  the set \[F \definedas \{x \in X, \exists \rho \in [0, 1], x = \rho \Phi(x)\}\]
 is bounded.  Then $\Phi$ has a fixed point: \[\exists x \in X, x = \Phi(x),\]
 and the set of fixed points is compact.
\end{theorem}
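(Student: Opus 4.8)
The plan is to reduce the statement to the Schauder fixed point theorem---already invoked in the proof of Lemma \ref{lmSubSuperSolutions}---by confining $\Phi$ to a large closed ball by means of a radial retraction. Since $F$ is bounded by hypothesis, I would first fix $M > 0$ with $\|x\| < M$ for every $x \in F$, and write $\overline{B}_M$ for the closed ball of radius $M$ centered at the origin. The radial retraction $r \colon X \to \overline{B}_M$, given by $r(y) = y$ for $\|y\| \leq M$ and $r(y) = M y / \|y\|$ for $\|y\| > M$, is continuous, so $\Psi \definedas r \circ \Phi$ maps $\overline{B}_M$ continuously into itself. Moreover $\Psi$ is compact: $\Phi$ sends the bounded set $\overline{B}_M$ into a relatively compact set, and $r$ is continuous. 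As $\overline{B}_M$ is a closed, bounded, convex subset of the Banach space $X$, Schauder's theorem produces a point $x \in \overline{B}_M$ with $\Psi(x) = x$.

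The crucial step is to upgrade this to a genuine fixed point of $\Phi$, and this is precisely where boundedness of $F$ is used. If $\|\Phi(x)\| \leq M$, then $\Psi(x) = \Phi(x)$ and $x = \Phi(x)$ directly. Otherwise $\|\Phi(x)\| > M$, and then $x = \Psi(x) = M \Phi(x)/\|\Phi(x)\|$, so $x = \rho \Phi(x)$ with $\rho \definedas M/\|\Phi(x)\| \in (0,1)$. This means $x \in F$, hence $\|x\| < M$; but the same identity gives $\|x\| = M$, a contradiction. Thus the second case is impossible and $x$ is a fixed point of $\Phi$.

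For the compactness of $\mathrm{Fix}(\Phi) \definedas \{x \in X : x = \Phi(x)\}$, I would note first that taking $\rho = 1$ shows $\mathrm{Fix}(\Phi) \subset F$, so it is bounded. It is closed by continuity of $\Phi$: if $x_n \to x$ with $x_n = \Phi(x_n)$, then $x = \lim_n \Phi(x_n) = \Phi(x)$. Finally, $\mathrm{Fix}(\Phi) = \Phi(\mathrm{Fix}(\Phi))$ is the image of a bounded set under the compact map $\Phi$, hence relatively compact; being also closed, it is compact.

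I expect the only real obstacle to be the middle paragraph: the whole scheme collapses without the a priori bound coming from boundedness of $F$, which is exactly the device that excludes the spurious fixed points of $\Psi$ sitting on the sphere $\|x\| = M$, where the retraction is genuinely active. Once Schauder's theorem is granted, the remaining verifications are routine.
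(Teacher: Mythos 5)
Your argument is correct and is exactly the standard proof of Schaefer's (Leray--Schauder) theorem: radial retraction onto a ball containing $F$, Schauder's fixed point theorem for the truncated map $\Psi = r\circ\Phi$, and the boundedness of $F$ to rule out spurious fixed points on the sphere $\|x\|=M$. The paper does not prove this statement itself but cites \cite[Chapter 11]{GilbargTrudinger}, where precisely this argument is given; your additional paragraph establishing compactness of the fixed point set (closed, contained in $F$, hence in the relatively compact image $\Phi(F)$) correctly covers the one clause of the statement that goes beyond the textbook formulation.
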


We choose $X = C^0(\Sigma, \bR)$ as a Banach space and
construct the mapping $\Phi$ as follows:

Given $v \in X$,
\begin{itemize}
 \item From Proposition \ref{propVector} there exists a unique
 $W \definedas W(v) \in W^{2, p}$ solving
 \begin{equation}\label{eqVectorModified}
 -\frac{1}{2} L^* L W = - \dot{u} \cdot du + \frac{v^2}{2} d \tau,
 \end{equation}
 which is Equation \ref{eqVector2} with $e^{\phi} = v$. Further
 $W$ depends continuously on $v \in C^0$ for the $W^{2, p}$-norm.
 \item $W \in W^{2, p}$ can then be continuously mapped to
 $H \definedas \sigma + L W \in W^{1, p}$
 \item and, in turn, $H$ can be compactly embedded into $C^0$.
 \item Proposition \ref{propLichnerowicz} yields a unique
 $\phi \in W^{2, p}$ solving the Lichnerowicz equation \eqref{eqLichnerowicz}
 with the $H$ we previously found.
\end{itemize}

Setting $\Phi(v) \definedas e^{\phi} \in C^0(\Sigma, \bR)$, we loop the loop
providing a continuous compact map $\Phi: X \to X$. Thus, we are almost under
the assumptions of Theorem \ref{thmSchaefer}. All we need to check is that
the set $F$ is bounded. This is the content of the next proposition:

\begin{proposition}\label{propLimit}
 Assume that the set
 \[F \definedas \{v \in L^{\infty}(\Sigma, \bR), \exists \rho \in [0, 1], v = \rho \Phi(v)\}\]
 is unbounded. Then there exists a constant $\rho_0 \in [0, 1]$ and a
 non-zero $W \in W^{2, p}$ such that
 \[- \frac{1}{2} L^* L W = \frac{\sqrt{2}}{2} \rho_0 \left|L W\right| \frac{d \tau}{|\tau|}.\]
\end{proposition}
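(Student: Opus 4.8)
The plan is to argue by blow-up, in the spirit of \cite{DahlGicquaudHumbert}. Since $F$ is assumed unbounded, choose a sequence $(\rho_n, v_n)$ with $v_n = \rho_n \Phi(v_n)$ and $\|v_n\|_{L^\infty} \to \infty$; passing to a subsequence we may assume $\rho_n^2 \to \rho_0 \in [0,1]$. For each $n$, let $W_n = W(v_n) \in W^{2,p}$ solve \eqref{eqVectorModified}, set $H_n = \sigma + LW_n$, and let $\phi_n \in W^{2,p}$ be the associated solution of \eqref{eqLichnerowicz}, so that $v_n = \rho_n e^{\phi_n}$. First I would check that $\gamma_n \definedas \|W_n\|_{W^{2,p}} \to \infty$: since $\rho_n \le 1$, we have $\|e^{\phi_n}\|_{L^\infty} \ge \|v_n\|_{L^\infty} \to \infty$, so $\max_\Sigma \phi_n \to \infty$; but the explicit super-solution $a_1$ built in the proof of Proposition \ref{propLichnerowicz} bounds $\max_\Sigma \phi_n$ in terms of $\ln(1 + \|H_n\|_{L^\infty}^2)$, whence $\|H_n\|_{L^\infty} \to \infty$. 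As $\sigma$ is fixed, this forces $\|LW_n\|_{L^\infty} \to \infty$ and hence $\gamma_n \to \infty$.

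Next I would normalize. Set $\What_n \definedas W_n/\gamma_n$, so that $\|\What_n\|_{W^{2,p}} = 1$ and, writing $Y_n \definedas \tfrac{1}{\gamma_n}\left(-\dot{u}\cdot du\right) + \tfrac{v_n^2}{2\gamma_n}\,d\tau$ for the normalized source,
\[
-\tfrac12 L^* L \What_n = Y_n .
\]
Since $p > 2$ and $\dim \Sigma = 2$, the embedding $W^{2,p} \hookrightarrow C^1$ is compact, so after passing to a further subsequence $\What_n \rightharpoonup \What$ in $W^{2,p}$ and $\What_n \to \What$ in $C^1$. In particular $L\What_n \to L\What$ in $C^0$, and $|H_n|/\gamma_n = |\,\sigma/\gamma_n + L\What_n| \to |L\What|$ in $C^0$.

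The heart of the matter is to identify $\lim_n v_n^2/\gamma_n$. Rewriting \eqref{eqLichnerowicz} as
\[
e^{2\phi_n}\frac{\tau^2}{4} - e^{-2\phi_n}\frac{|H_n|^2}{2} = \frac12\left(1+|\nabla u|^2\right) + e^{-2\phi_n}\frac{\dot{u}^2}{2} - \Delta \phi_n ,
\]
the two terms on the left are the only ones able to blow up, and they must balance: where $\phi_n \to \infty$ one gets, to leading order, $e^{4\phi_n}\tau^2/4 \approx |H_n|^2/2$, i.e. $e^{2\phi_n} \approx \sqrt2\,|H_n|/|\tau|$, so that $v_n^2/\gamma_n = \rho_n^2 e^{2\phi_n}/\gamma_n \to \rho_0\,\sqrt2\,|L\What|/|\tau|$. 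I expect this uniform balance to be the \emph{main obstacle}. One must show rigorously that, after dividing by $\gamma_n$, the contribution of $\Delta\phi_n$ together with the bounded terms $\tfrac12(1+|\nabla u|^2)$ and $e^{-2\phi_n}\dot{u}^2/2$ is negligible, and that at points where $\phi_n$ stays bounded one correspondingly has $|L\What| = 0$. This cannot be read off the pointwise heuristic above and requires a careful maximum-principle or integral estimate controlling $\Delta\phi_n/\gamma_n$.

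Granting the convergence $\tfrac{v_n^2}{2\gamma_n} \to \tfrac{\sqrt2}{2}\,\rho_0\,|L\What|/|\tau|$ in $C^0$ (hence in $L^p$), and noting that the first term of $Y_n$ tends to $0$ in $L^p$ because $\gamma_n \to \infty$ while $\dot{u}\cdot du$ is fixed, I pass to the limit: $L^*L\What_n \rightharpoonup L^*L\What$ weakly in $L^p$, so
\[
-\tfrac12 L^* L \What = \frac{\sqrt2}{2}\,\rho_0\,|L\What|\frac{d\tau}{|\tau|},
\]
which is the asserted equation. Finally, nontriviality of $\What$ follows by bootstrapping on the equation itself: if $\What \equiv 0$ then $|L\What| \equiv 0$, so $Y_n \to 0$ in $L^p$, and the a priori estimate of Proposition \ref{propVector} gives $\|\What_n\|_{W^{2,p}} \lesssim \|Y_n\|_{L^p} \to 0$, contradicting $\|\What_n\|_{W^{2,p}} = 1$. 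Hence $\What \neq 0$, which completes the proof.
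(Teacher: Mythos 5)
Your overall blow-up strategy is the same as the paper's, but the proof as written has a genuine gap exactly where you flag ``the main obstacle'': the uniform convergence of the rescaled conformal factor to $\sqrt{2}\,|L\Wtil_\infty|/|\tau|$ is \emph{assumed} (``granting the convergence\dots''), and this is precisely the step that constitutes essentially the entire content of the paper's proof. The paper establishes it by a two-sided barrier argument applied to the rescaled Lichnerowicz equation \eqref{eqRescaledLichnerowicz1}: for the upper bound it takes $\psi_+=\tfrac12\log f_+$ with $f_\infty+\tfrac\epsilon2\le f_+\le f_\infty+\epsilon$ and checks that the condition reduces to $o(1)+\tfrac12|L\Wtil_\infty|^2-\tfrac{\tau^2}{4}f_+^2\le 0$, using $\inf\tau^2>0$; for the lower bound the difficulty is that $f_\infty$ may vanish, and the paper works on the open set $\cA=\{f_->0\}$, choosing $f_-$ to decay like $\epsilon e^{-1/r}$ at $\partial\cA$ so that $e^{2\psi_-}\Delta\psi_-$ remains bounded and $\psi_--\psi_i\to-\infty$ at $\partial\cA$, which makes the maximum principle applicable on $\cA$. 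Your remark that one must also show $|L\Wtil_\infty|=0$ wherever $\phi_n$ stays bounded is exactly what this sub-/super-solution machinery delivers; a ``maximum-principle or integral estimate controlling $\Delta\phi_n/\gamma_n$'' is not a detail to be filled in routinely, it is the theorem.

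A secondary issue is your choice of normalization $\gamma_n=\|W_n\|_{W^{2,p}}$. The paper instead sets $\gamma_i=\|e^{\phi_i}\|_{L^\infty}$ and rescales $\Wtil_i=W_i/\gamma_i^2$, which gives $\|e^{\psi_i}\|_{L^\infty}=1$ \emph{by construction} and then boundedness of $\Wtil_i$ in $W^{2,p}$ for free from Proposition \ref{propVector}. With your normalization, the quantity $v_n^2/\gamma_n$ whose limit you need has no a priori sup bound: the elliptic estimate only controls $\|v_n^2\,d\tau\|_{L^p}/\gamma_n$, and since $d\tau$ may vanish this does not control $v_n^2/\gamma_n$ itself, so even extracting a limit of the normalized source requires an argument. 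The paper's normalization avoids this entirely, and the comparability of the two normalizations is itself a consequence of the balance you are trying to prove, so it cannot be used as an input. On the positive side, your preliminary argument that $\gamma_n\to\infty$ (via the explicit super-solution $a_1$ forcing $\|H_n\|_{L^\infty}\to\infty$) is correct, and your non-triviality argument for $\Wtil_\infty$ via the a priori estimate of Proposition \ref{propVector} is a valid alternative to the paper's observation that $\|f_\infty\|_{L^\infty}=1$ --- but both are conditional on the missing convergence step.
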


\begin{proof}
Assuming that $F$ is unbounded, we can find sequences $(\rho_i)_{i \geq 0}$ and
$(v_i)_{i \geq 0}$ such that $0 \leq \rho_i \leq 1$, $v_i = \rho_i \Phi(v_i)$
and $\left\|v_i\right\|_{L^\infty} \to \infty$. Setting
$\phi_i = \log(\Phi(v_i))$ (i.e. $v_i = \rho_i e^{\phi_i}$), and defining
$W_i$ as the solution to \eqref{eqVectorModified} with $v \equiv v_i$, we
get the following equations:

\begin{subequations}\label{eqConformalConstraints3}
\begin{empheq}[left=\empheqlbrace]{align}
\label{eqVector3}
-\frac{1}{2} L^* L W_i & = - \dot{u} \cdot du + \rho_i^2 \frac{e^{2\phi_i}}{2} d \tau,\\
\label{eqLichnerowicz3}
\Delta \phi_i + e^{-2\phi_i} \left(\frac{1}{2} \dot{u}^2 + \frac{1}{2} \left|\sigma + LW_i\right|^2\right) & = e^{2\phi_i} \frac{\tau^2}{4} - \frac{1}{2}\left(1 + \left|\nabla u\right|^2\right),
\end{empheq}
\end{subequations}

Following \cite{DahlGicquaudHumbert, GicquaudSakovich, Nguyen}, we set
$\gamma_i \definedas \left\|e^{\phi_i}\right\|_{L^\infty}$
and we introduce the following rescaled objects:

\[
\psi_i \definedas \phi_i - \log(\gamma_i),\;
\Wtil_i \definedas \frac{1}{\gamma_i^2} W_i.
\]

Note that since we assumed that $\left\|v_i\right\|_{L^\infty} =
\rho_i \gamma_i \to \infty$, with $0 \leq \rho_i \leq 1$, we also
have that $\gamma_i \to \infty$. We will assume without loss of generality
that $\gamma_i \geq 1$. The following equations for $\psi_i$ and $\Wtil_i$
follow from the definition:

\begin{subequations}\label{eqRescaled1}
\begin{empheq}[left=\empheqlbrace]{align}
\label{eqRescaledVector1}
-\frac{1}{2} L^* L \Wtil_i & = - \frac{1}{\gamma_i^2} \dot{u} \cdot du + \rho_i^2 \frac{e^{2\psi_i}}{2} d \tau,\\
\label{eqRescaledLichnerowicz1}
\frac{1}{\gamma_i^2} \Delta \psi_i + e^{-2\psi_i} \left(\frac{1}{2 \gamma_i^4} \dot{u}^2 + \frac{1}{2} \left|\frac{\sigma}{\gamma_i^2} + L\Wtil_i\right|^2\right)
  & = e^{2\psi_i} \frac{\tau^2}{4} - \frac{1}{2 \gamma_i^2}\left(1 + \left|\nabla u\right|^2\right),
\end{empheq}
\end{subequations}

It follows from the definition of $\gamma_i$ that
$\left\|e^{\psi_i}\right\|_{L^\infty} = \left\|\frac{1}{\gamma_i} e^{\phi_i}\right\|_{L^\infty} = 1$.
Hence, from Proposition \ref{propVector} applied to \eqref{eqRescaledVector1},
we have

\begin{align*}
 \left\|\Wtil_i\right\|_{W^{2, p}}
   & \lesssim \left\|- \frac{1}{\gamma_i^2} \dot{u} \cdot du + \rho_i^2 \frac{e^{2\psi_i}}{2} d \tau\right\|_{L^p}\\
   & \lesssim \frac{1}{\gamma_i^2} \left\|\dot{u} \cdot du\right\|_{L^p} +  \left\|d\tau\right\|_{L^p}\\
   & \lesssim 1.
\end{align*}

Consequently, $\Wtil_i$ is bounded in $W^{2, p}$. Since the embedding
$W^{2, p} \into C^1$ is compact, we can assume, up to extraction, that
$\Wtil_i$ converges to some $\Wtil_\infty \in W^{2, p}$ for the $C^1$-norm.
We can also assume that $\rho_i \to \rho_\infty \in [0, 1]$. All we need to
do is to prove that $e^{2\psi_i}$ converges in $L^\infty$ to
$f_\infty \definedas \sqrt{2} \frac{\left|L \Wtil_\infty\right|}{|\tau|}$.

Indeed, passing to the limit in Equation \eqref{eqRescaledVector1}, we get
that $\Wtil_\infty$ satisfies

\begin{align}
-\frac{1}{2} L^* L \Wtil_\infty
 & = \rho_\infty^2 \frac{f_\infty}{2} d\tau\nonumber\\
 & = \frac{\sqrt{2}}{2} \rho_\infty^2 \left|L \Wtil_\infty\right|\frac{d\tau}{|\tau|}.
   \label{eqLimit0}
\end{align}

Hence, $\Wtil_\infty$ satisfies the limit equation with
$\alpha = \rho_\infty^2$. Since $e^{2 \psi_i}$ has $L^\infty$-norm 1 and
converges in $L^\infty$ to $f_\infty$, we have
$\left\|f_\infty\right\|_{L^\infty} = 1$. In particular,
$L \Wtil_\infty \not\equiv 0$ which proves that $\Wtil_\infty \not\equiv 0$.

To prove convergence of $e^{2\psi_i}$ to $f_\infty$, we show that for any
$\epsilon > 0$, there exists an $i_0$ such that
\[\left|e^{2 \psi_i} - f_\infty\right| \leq \epsilon\]
for all $i \geq i_0$. We do it in two steps:

\begin{itemize}
 \item We first show the upper bound
\[e^{2 \psi_i} \leq f_\infty + \epsilon\]
by selecting a smooth function $f_+$ such that
\[f_\infty + \frac{\epsilon}{2} \leq f_+ \leq f_\infty + \epsilon\]
and proving that for $i_0$ large enough,
$\psi_+ \definedas \frac{1}{2} \log(f_+)$ is a super-solution to
\eqref{eqRescaledLichnerowicz1}:

\begin{equation}\label{eqRescaledLichnerowicz2}
\frac{1}{\gamma_i^2} \Delta \psi_+ + e^{-2\psi_+} \left(\frac{1}{2 \gamma_i^4} \dot{u}^2 + \frac{1}{2} \left|\frac{\sigma}{\gamma_i^2} + L\Wtil_i\right|^2\right)
  \leq e^{2\psi_+} \frac{\tau^2}{4} - \frac{1}{2 \gamma_i^2}\left(1 + \left|\nabla u\right|^2\right).
\end{equation}

Since $f_\infty \geq 0$, $f_+ \geq \frac{\epsilon}{2}$ so $\psi_+$ is
a smooth function. In particular, $\left|\Delta \psi_+\right|$ is
bounded. Moreover, since $\Wtil_i \to \Wtil_\infty$ in $C^1$ and
$\gamma_i \to \infty$, we have
\[\left|\frac{\sigma}{\gamma_i^2} + L\Wtil_i\right|^2 \to \left|L\Wtil_\infty\right|^2\]
as $i$ tends to infinity. So the condition \eqref{eqRescaledLichnerowicz2}
can be rephrased as
\[o(1) + \frac{1}{2} \left|L \Wtil_\infty\right|^2 - \frac{\tau^2}{4} f_+^2 \leq 0,\]
where $o(1)$ denotes a sequence of functions tending uniformly to 0
when $i \to \infty$. We have
\[f_+^2 \geq \left(f_\infty + \frac{\epsilon}{2}\right)^2 \geq f_\infty^2 + \frac{\epsilon^2}{4}.\]
This yields, for $i$ big enough,
\[
o(1) + \frac{1}{2} \left|L \Wtil_\infty\right|^2 - \frac{\tau^2}{4} f_+^2
\leq o(1) + \frac{\tau^2}{4} f_\infty^2 -  \frac{\tau^2}{4} \left(f_\infty^2 + \frac{\epsilon^2}{4}\right)
\leq o(1) - \frac{\tau_0^2 \epsilon^2}{4} \leq 0,
\]
where $\tau_0^2 \definedas \inf_\Sigma \tau^2$ is positive by assumption.
Therefore $\psi_+$ is a super-solution to Equation \eqref{eqRescaledLichnerowicz1}
and we obtain, for $i$ big enough
\begin{align*}
\frac{1}{\gamma_i^2} \Delta(\psi_+-\psi_i)
 & \leq -\left(e^{-2\psi_+}-e^{-2\psi_i}\right)\left(\frac{\dot{u}^2}{2 \gamma_i^4} +\frac{1}{2}\left|L \Wtil_i+\frac{\sigma}{\gamma_i^2}\right|^2\right)
    + \frac{\tau^2}{4} \left(e^{2\psi_+}-e^{2 \psi_i}\right)\\
 & \leq \frac{\tau^2}{2}e^{2\psi_i} (\psi_+-\psi_i)\int_0^1 e^{2\lambda (\psi_+-\psi_i)}d\lambda\\
 &  \qquad + \left(\frac{\dot{u}^2}{2 \gamma_i^4} +\frac{1}{2}\left|L \Wtil_i+\frac{\sigma}{\gamma_i^2}\right|^2\right) e^{-2\psi_i}(\psi_+-\psi_i)\int_0^1e^{-2\lambda(\psi_+-\psi_i)}d\lambda\\
 & \leq \underbrace{\left[\frac{\tau^2}{2}e^{2\psi_i} \int_0^1 e^{2\lambda (\psi_+-\psi_i)}d\lambda
         + \left(\frac{\dot{u}^2}{2 \gamma_i^4} +\frac{1}{2}\left|L \Wtil_i+\frac{\sigma}{\gamma_i^2}\right|^2\right) e^{-2\psi_i}\int_0^1e^{-2\lambda(\psi_+-\psi_i)}d\lambda \right]}_{> 0} (\psi_+-\psi_i).
\end{align*}
The maximum principle implies that ${\psi_i} \leq \psi_+$, for $i$
big enough, so
\[
e^{2 \psi_i}\leq f_\infty +\epsilon. 
\]
\item Second, we show the lower bound
\[e^{2 \psi_i} \geq f_\infty - \epsilon\]
We have to be more careful than for the super-solution, since $f_\infty$ can
vanish somewhere. Let $f_{-}$ be a smooth function such that
\[
\sqrt{\max(f_\infty^2-\epsilon,0)}\leq f_{-}\leq \sqrt{\max (f_\infty^2-\frac{\epsilon}{2},0)}.
\]

We will work on the open domain $\cA$ defined by 
\[\cA= \{x \in \Sigma, f_{-}(x)>0\}.\]

On $\cA$, we can define $\psi_- = \frac{1}{2}\ln(f_-)$. We want to show that the
following inequality is satisfied on $\cA$:

\begin{equation}\label{subsol}
\frac{1}{\gamma_i^2} \Delta \psi_- + e^{-2 \psi_-}\left(\frac{1}{2\gamma_i^4}\dot{u}^2 +\frac{1}{2}\left|\frac{\sigma}{\gamma_i^2} + L \Wtil_i\right|^2\right)
 \geq e^{2 \psi_-} \frac{\tau^2}{4} - \frac{1}{2\gamma_i^2}(1+|\nabla u|^2).
\end{equation}

Since $e^{2 \psi_-}>0$ on $\cA$, that is equivalent to showing
\[
\frac{1}{\gamma_i^2} e^{2 \psi_-} \left(\Delta \psi_- +\frac{1}{2}(1+|\nabla u|^2)\right) + \left(\frac{1}{2 \gamma_i^4}\dot{u}^2 +\frac{1}{2}\left|\frac{\sigma}{\gamma_i^2}+L \Wtil_i\right|^2\right)-e^{4 \psi_-}\frac{\tau^2}{4}\geq 0.
\] 

We calculate
\[e^{2\psi_-} \Delta \psi_-= \frac{1}{2} \left[ \Delta f_- - \frac{\left|\nabla f_-\right|^2}{f_-}\right].\]

We can assume that $\partial \cA$ is the disjoint union of smooth curves and denote by $r$
the signed distance function to $\partial \cA$
which is positive where $f_\infty \geq \epsilon$. We choose $f_-$ such that $f_- \equiv 0$ whenever
$r \leq 0$ and $f_- \equiv \epsilon e^{-1/r}$ if $r > 0$ is sufficiently small for some positive $\epsilon$.
For such a choice of $f_-$, $e^{2\psi_-}\Delta \psi_-$ is bounded on $\cA$.

Therefore, as for the upper bound, the condition \eqref{subsol} can be written
\[o(1) +\frac{1}{2}|LW_\infty|^2-e^{4\psi_-}\frac{\tau^2}{4}\geq 0.\]
On $\cA$ we have $e^{4\psi_-}\leq f^2_-\leq f_\infty^2-\frac{\epsilon}{2}$. This yields for $i$ big enough 
\[o(1) +\frac{1}{2}|LW_\infty|^2  -e^{4\psi_-}\frac{\tau^2}{4} \geq 
o(1)+\frac{\tau^2}{4}f_\infty^2-\frac{\tau^2}{4}\left(f_\infty^2-\frac{\epsilon}{2}\right) \geq
o(1)+\frac{\tau^2}{4}\frac{\epsilon}{2}\geq 0.\]
Since $\psi_{-}(x)-\psi_i(x)\rightarrow-\infty$ when $x\rightarrow \partial \cA$, $\psi_{-}(x)-\psi_i(x)$
attains its maximum on $\cA$. Therefore, since $\psi_{-}$ is a subsolution, we can apply the maximum principle on
$\cA$, to deduce that $\psi_- \leq \psi_i$. 
This yields
\[\max(f_\infty^2-\epsilon,0) \leq e^{4 \psi_i}.\]
\end{itemize}
This concludes the proof of the convergence in $L^\infty$ of $e^{2 \psi_i}$ towards $f_\infty$.
\end{proof}

\section{Proof of Corollary \ref{corNearCMC}}\label{secEx}

To prove Corollary \ref{corNearCMC}, all we need to do is to prove that
the limit equation \eqref{eqLimit} admits no non-zero solution under the
assumption
\[\left\|\frac{d\tau}{\tau}\right\|_{L^\infty(\Sigma, T^* \Sigma)} < 1.\]

We take the scalar product of the limit equation with $W$ and integrate
over $\Sigma$. From the Bochner formula \eqref{eqBochner}, we get:

\begin{align*}
-\frac{1}{2} \int_\Sigma \left|LW\right|^2 d\mu^{g_0}
 & = \alpha \frac{\sqrt{2}}{2} \int_\Sigma \left|LW\right| \left\<W, \frac{d\tau}{|\tau|}\right\> d\mu^{g_0}\\
\int_\Sigma \left|\nabla W\right|^2 d\mu^{g_0} + \frac{1}{2} \int_\Sigma \left|W\right|^2 d\mu^{g_0}
 & \leq \alpha \sqrt{2} \int_\Sigma \left|\nabla W\right| \left|\frac{d\tau}{\tau}\right| |W| d\mu^{g_0}\\
 & \leq \alpha \int_\Sigma \left|\nabla W\right|^2 d\mu^{g_0} + \frac{\alpha}{2}\int_\Sigma \left|\frac{d\tau}{\tau}\right|^2 |W|^2 d\mu^{g_0}\\
\frac{1}{2} \int_\Sigma \left|W\right|^2 d\mu^{g_0}
 & \leq \frac{\alpha}{2} \left\|\frac{d\tau}{\tau}\right\|_{L^\infty}^2 \int_\Sigma |W|^2 d\mu^{g_0},
\end{align*}
where we used the well-known inequality $ab \leq \frac{a^2}{2} + \frac{b^2}{2}$
with $a = \sqrt{2} \left|\nabla W\right|$ and $b = \left|\frac{d\tau}{\tau}\right| |W|$.
The last inequality immediately yields that $W \equiv 0$ since we assumed
that $\left\|\frac{d\tau}{\tau}\right\|_{L^\infty}^2 < 1$ and $\alpha \in [0, 1]$.

\providecommand{\bysame}{\leavevmode\hbox to3em{\hrulefill}\thinspace}
\providecommand{\MR}{\relax\ifhmode\unskip\space\fi MR }
\providecommand{\MRhref}[2]{%
  \href{http://www.ams.org/mathscinet-getitem?mr=#1}{#2}
}
\providecommand{\href}[2]{#2}


\end{document}